\documentclass[
  aps,
  pra,
  reprint,
  superscriptaddress
]{revtex4-2}

\usepackage[british]{babel}

\usepackage[T1]{fontenc}
\usepackage{libertinus}
\usepackage{libertinust1math}

\usepackage{mathtools}
\usepackage{amsthm}
\usepackage{amssymb}
\usepackage{bm}
\usepackage{siunitx}

\usepackage{graphicx}
\usepackage{tikz}
\usepackage[dvipsnames]{xcolor}

\usepackage[
  breaklinks=true,
  colorlinks=true,
  linkcolor=myblue,
  urlcolor=myblue,
  citecolor=myblue
]{hyperref}
\usepackage{orcidlink}

\makeatletter
\newcommand*{\transpose}{%
  {\mathpalette\@transpose{}}%
}
\newcommand*{\@transpose}[2]{%
  \raisebox{\depth}{$\m@th#1\intercal$}%
}
\makeatother

\newcommand{\bs}[1]{\boldsymbol{#1}}
\newcommand{\dd}{\mathop{}\!\mathrm{d}}

\newcommand{\Tr}{\operatorname{Tr}}

\newcommand{\PGM}{\mathrm{PGM}}
\newcommand{\opt}{\mathrm{opt}}
\newcommand{\IF}{\mathrm{IF}}

\newtheorem{theorem}{Theorem}

\newtheorem{lemma}{Lemma}

\newtheorem{remark}{Remark}
\definecolor{myblue}{RGB}{34,31,150}
\definecolor{mygreen}{RGB}{34,150,31}
\definecolor{myred}{RGB}{190,30,30}

\begin{document}

\title{Measurement incompatibility in Bayesian quantum estimation beyond quadratic loss}

\author{Francesco Albarelli\,\orcidlink{0000-0001-5775-168X}}
\email{francesco.albarelli@gmail.com}
\affiliation{Università di Parma, Dipartimento di Scienze Matematiche, Fisiche e Informatiche, I-43124 Parma, Italy}
\affiliation{INFN—Sezione di Milano-Bicocca, Gruppo Collegato di Parma, I-43124 Parma, Italy}

\author{Stefano Chessa\,\orcidlink{0000-0003-2771-8330}}
\affiliation{Scuola Normale Superiore, 56126 Pisa, Italy}

\author{Jes\'us Rubio\,\orcidlink{0000-0002-8193-8273}}
\affiliation{Department of Physics and Astronomy, University of Exeter, Stocker Road, Exeter EX4 4QL, United Kingdom}

\date{20th October 2026}

\begin{abstract}
In single-shot Bayesian multiparameter quantum estimation, measurement incompatibility can increase the minimum joint quadratic loss by at most a factor of two relative to the bound obtained by optimising each parameter separately.
Using the generalised Barnum--Knill inequality, we establish the same bound for a wider class of additive losses generated by positive score functions, showing that this limitation extends beyond the standard quadratic setting.
To illustrate this result, we consider a broad family of periodic losses, including the sine loss sometimes employed in interferometry, and derive the corresponding incompatibility bound for Bayesian multiphase estimation with arbitrary prior knowledge.
We further extend the framework to unbounded losses, such as absolute losses; this also recovers the case of quadratic losses.
\end{abstract}

\maketitle

\section{Introduction}

Measurement incompatibility, the fact that measurements individually optimal for different parameters cannot always be performed simultaneously, is a fundamental limitation in multiparameter quantum estimation theory (QET)~\cite{Albarelli2020a,Demkowicz-Dobrzanski2020,Pezze2025,Conlon2026a}.
A related question also arises in hypothesis testing for the simultaneous discrimination of multiple quantum states~\cite{Audenaert2014a,liDiscriminatingQuantumStates2016,Cheng2026,Lee2026b}.
Until recently, its quantitative impact on estimation errors had been investigated predominantly within local QET, where the parameters of interest are assumed to be approximately known and precision is characterised through Cramér--Rao-type bounds~\cite{Belliardo2021,Sidhu2021,Candeloro2021c,Chen2022f,Hayashi2023a,Candeloro2024,Chen2024j,Das2025b,He2026d,He2026c,Wang2026n}.
A work by some of us~\cite{Albarelli2026c} has introduced a treatment of this question within the Bayesian paradigm, which incorporates prior information and applies beyond local and asymptotic regimes.

In this framework, measurement incompatibility can be quantified relative to the symmetric posterior mean (SPM) bound.
The SPM bound represents the mean quadratic loss, with individual losses of the form $(\tilde{\theta} - \theta)^2$ for each parameter $\theta$ and its estimate $\tilde{\theta}$, attainable if the individually optimal measurements were jointly implementable.
Introduced in Ref.~\cite{Rubio2020a} (see also Refs.~\cite{Sidhu2020,Demkowicz-Dobrzanski2020}), it plays a role analogous to that of the Helstrom Cramér--Rao bound (CRB) based on the symmetric logarithmic derivative (SLD) in local QET~\cite{Helstrom1976,Holevo2011b,Paris2009,Liu2020}.
Since then, tighter lower bounds, such as the Bayesian Nagaoka--Hayashi bound, have been developed~\cite{Suzuki2024a}, and the topic has attracted renewed attention~\cite{Zhang2024g,Zhang2026n,Zhao2026c,Bia2026,Law2026}.

Measurement incompatibility can be shown to increase the minimum loss above the SPM benchmark by at most a factor of two, a result obtained by considering an explicit estimation strategy through the so-called pretty good measurement (PGM)~\cite{Albarelli2026c}.
This factor of two parallels the result in local QET, where the Holevo CRB, which represents the fundamental asymptotically attainable weighted sum of estimator variances for all parameters (under suitable regularity conditions~\cite{Demkowicz-Dobrzanski2020,Yang2019}), is at most twice the Helstrom SLD-CRB~\cite{Carollo2019,Tsang2020a}.

Yet, the quadratic loss for which this result was established, both in Bayesian and local QET, is not always the appropriate figure of merit.
Quadratic losses can accommodate a wide range of loss functions beyond the standard square error through the location-isomorphic mapping introduced in symmetry-informed estimation~\cite{Rubio2024}.
However, such transformations do not capture, e.g., the global structure of periodic parameter spaces.
A prominent example is phase estimation, where the underlying $U(1)$ symmetry of each parameter naturally calls for trigonometric losses~\cite{Holevo2011b}, such as the sine loss~\cite{Demkowicz-Dobrzanski2011, Helstrom1976}, which reduces to the standard quadratic loss for small errors but is different otherwise.
This raises the question of whether the factor-of-two limitation on measurement incompatibility is specific to quadratic loss or holds more generally.

Here we show that it extends to \emph{any} additive loss whose individual contributions are generated by positive score functions.
Our proof builds on the generalised Barnum--Knill (BK) inequality of Mishra, Lami, and Wilde~\cite{Mishra2025}.
Specifically, we show that the same PGM simultaneously incurs, for every parameter, at most twice the minimum loss achievable by optimising the measurement for that parameter alone.
The additivity of the total loss then yields the same bound on measurement incompatibility.
This ability of the PGM to extract information about multiple quantities simultaneously has also been explored in other quantum information tasks~\cite{Lin2025a,Grewal2026a}.
We first apply this result to a broad class of periodic Holevo losses, including the sine loss, establishing the corresponding incompatibility bound for Bayesian multiphase estimation with arbitrary prior distributions.
We further extend the result to unbounded losses through positive exponential scores; we illustrate this extension by applying it to absolute losses and use it to recover the result for quadratic losses.

In short, our results show that, in Bayesian multiparameter QET, the positivity of the score functions associated with additive losses is a sufficient condition to limit the increase in minimum joint loss due to measurement incompatibility to at most a factor of two relative to optimising each parameter separately, even when the loss does not have a simple quadratic form.

\section{Bayesian multiparameter QET}
\label{sec:bayesian}

We start by summarising the Bayesian approach to multiparameter QET and define the associated incompatibility-free benchmark.
We follow the notational conventions introduced in Ref.~\cite{Albarelli2026c}.

Let $\bs{\theta}\in\Omega = \Omega_1 \times \cdots \times \Omega_d$ denote a vector of hypotheses for a corresponding set of $d$ unknown parameters, with prior density $p(\bs{\theta})$.
Further consider a parametric family of states $\rho(\bs{\theta})$ defined on a Hilbert space~\footnote{The generalised BK inequality on which our results are based holds also for infinite-dimensional Hilbert spaces~\cite{Mishra2025}.
Thus, we expect our results to extend to this setting under mild regularity assumptions, which however we do not work out in detail here, favouring a more concise and pedagogical presentation.}.
A measurement of $\rho(\boldsymbol{\theta})$ can be described by a positive operator-valued measure (POVM) with elements $\lbrace M(\bs{\tilde{\theta}}) \rbrace$ labelled directly by the estimates $\bs{\tilde{\theta}}\in\Omega$.

To quantify the decision-theoretic loss associated with reporting a vector estimate $\bs{\tilde{\theta}}$ when the true parameter values are $\bs{\theta}$, we introduce an \emph{additive} loss function
\begin{equation}
    \ell(\boldsymbol{\tilde{\theta}},\boldsymbol{\theta})
    =
    \sum_{j=1}^d w_j\ell_j(\tilde{\theta}_j,\theta_j),
    \quad w_j\geq0,
    \quad \ell_j\geq0.
    \label{eq:additive-loss}
\end{equation}
This recovers the quadratic loss with a diagonal weight matrix of elements $W_{jk}=w_j\delta_{jk}$, by taking $\ell_j(\tilde{\theta}_j,\theta_j)=(\tilde{\theta}_j-\theta_j)^2$, but is considerably more general.

The \emph{mean loss} associated with Eq.~\eqref{eq:additive-loss} reads 
\begin{equation}
    \mathcal L(M) \coloneqq \sum_{j=1}^d w_j\mathcal L_j(M),
\end{equation}
where
\begin{equation}
    \mathcal L_j(M)
    =\int\dd\boldsymbol{\theta}
    \dd\boldsymbol{\tilde{\theta}}\,
    p(\boldsymbol{\theta})
    \Tr[\rho(\boldsymbol{\theta})M(\boldsymbol{\tilde{\theta}})]
    \ell_j(\tilde{\theta}_j,\theta_j).
\end{equation}
The minimum loss is defined as $\mathcal L_{\opt}\coloneqq \min_M\mathcal L(M)$, with the minimising POVM corresponding to the jointly optimal strategy.

To isolate measurement incompatibility, consider first, for each parameter, the single-parameter optimum
\begin{equation}
    \mathcal L_j^*
    =\min_{M_j}
    \int\dd\boldsymbol{\theta}
    \dd\tilde{\theta}_j\,
    p(\boldsymbol{\theta})
    \Tr[\rho(\boldsymbol{\theta})M_j(\tilde{\theta}_j)]
    \ell_j(\tilde{\theta}_j,\theta_j),
    \label{eq:single-par_opt_loss}
\end{equation}
where $M_j$ ranges over POVMs with outcomes in $\Omega_j$.
In each of these estimation problems, the components of $\boldsymbol{\theta}$ other than $\theta_j$ are treated as unknown nuisance parameters~\cite{Suzuki2020b} and can therefore be marginalised over.
Since $\ell_j$ depends only on $\theta_j$, we see that $\mathcal L_j^*$ is the minimum single-parameter loss obtained from the marginal prior $p(\theta_j)$ and from the one-parameter family of averaged states
\begin{equation}
    \rho_j(\theta_j)
    =
    \int\prod_{k\neq j}\dd\theta_k\,
    p(\boldsymbol{\theta}_{\neg j}|\theta_j)\,
    \rho(\boldsymbol{\theta}),
    \label{eq:averaged_states}
\end{equation}
where $\boldsymbol{\theta}_{\neg j}$ collects all components except $\theta_j$.
Any correlations between $\theta_j$ and the other parameters, both through the prior and through the states $\rho(\boldsymbol{\theta})$, are thus fully taken into account in the quantity $\mathcal L_j^*$.

Any joint POVM $M$ defines, for each $j$, a POVM on $\Omega_j$ through its $j$th marginal
\begin{equation}
    M_j(\tilde{\theta}_j)
    =
    \int\prod_{k\neq j}\dd\tilde{\theta}_k\,
    M(\boldsymbol{\tilde{\theta}}),
    \label{eq:povm_marginal}
\end{equation}
which amounts to performing the measurement $M$, but retaining only the estimate of $\theta_j$.
Since $\ell_j$ depends on $\boldsymbol{\tilde{\theta}}$ only through $\tilde{\theta}_j$, the loss $\mathcal L_j(M)$ depends on $M$ only through $M_j$.
Conversely, every POVM $M_j$ on $\Omega_j$ is the $j$th marginal of some joint POVM, e.g. of
$M(\boldsymbol{\tilde{\theta}})=M_j(\tilde{\theta}_j)\prod_{k\neq j}q_k(\tilde{\theta}_k)$
for arbitrary probability densities $q_k$ on $\Omega_k$;
this amounts to reporting the remaining estimates
at random, independently of the measurement outcome.

We then define the incompatibility-free benchmark as
\begin{equation}
\label{eq:LIF}
    \mathcal L_{\IF} \coloneq \sum_{j=1}^d w_j\mathcal L_j^*,
\end{equation}
namely, the loss attainable if the measurements individually optimal for each parameter, with the others unknown, were jointly implementable.
This recovers Personick's minimum mean square loss~\cite{Personick1971} and leads to the SPM bound~\cite{Rubio2020a} for quadratic loss with a diagonal weight matrix.

\begin{lemma}\label{lem:if-benchmark}
The incompatibility-free benchmark satisfies
\begin{equation}
    \mathcal L_{\IF}\leq\mathcal L_{\opt}.
\end{equation}
\end{lemma}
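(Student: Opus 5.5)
The plan is to compare the joint optimum with the single-parameter optima term by term, using the marginalisation structure set up just before the statement. Let $M^{\opt}$ be a joint POVM attaining $\mathcal L_{\opt}$; if the minimum is not attained, take an infimising sequence, and the argument below carries over verbatim. By additivity,
\begin{equation*}
    \mathcal L_{\opt}=\mathcal L(M^{\opt})=\sum_{j=1}^d w_j\,\mathcal L_j(M^{\opt}),
\end{equation*}
so the statement reduces to showing $\mathcal L_j(M^{\opt})\geq\mathcal L_j^*$ for every $j$. The factors $w_j\geq0$ then preserve these inequalities under summation, which gives the claim.

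For the per-parameter step, I will use the observation made above Eq.~\eqref{eq:LIF}: since $\ell_j$ depends on $\bs{\tilde\theta}$ only through $\tilde\theta_j$, one can integrate over $\tilde\theta_k$, $k\neq j$, inside $\mathcal L_j(M^{\opt})$. This replaces $M^{\opt}(\bs{\tilde\theta})$ by its $j$th marginal $M^{\opt}_j(\tilde\theta_j)$ from Eq.~\eqref{eq:povm_marginal}. The result is exactly the objective of the minimisation in Eq.~\eqref{eq:single-par_opt_loss}, evaluated at the particular POVM $M^{\opt}_j$. I then need to check that $M^{\opt}_j$ is an admissible candidate. Positivity holds because it is an integral of positive operators. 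Normalisation holds because integrating over $\tilde\theta_j$ as well gives the full integral of $M^{\opt}$, which is the identity. Hence $\mathcal L_j(M^{\opt})\geq\mathcal L_j^*$.

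The only delicate point is justifying the interchange of integrals and the trace when forming the marginal. Since $\ell_j\geq0$ and $\Tr[\rho M]\geq0$, the integrand is non-negative, and Tonelli's theorem allows the reordering with no integrability assumptions. This is a measure-theoretic technicality rather than a genuine obstacle; the core of the proof is that the joint optimum's marginals are feasible for the separate problems, so each separate minimum can only be smaller.
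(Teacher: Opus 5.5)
Your proof is correct and follows the paper's argument: each $\mathcal L_j(M)$ is the single-parameter loss of the marginal $M_j$, hence at least $\mathcal L_j^*$, and weighting by $w_j\geq0$ and summing gives the claim. The only difference is cosmetic. The paper proves $\mathcal L_{\IF}\leq\mathcal L(M)$ for every joint $M$ and then minimises, which avoids your split into attained and non-attained optimum. Your Tonelli remark and your check that $M_j$ is a valid POVM spell out details the paper leaves implicit.
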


\begin{proof}
By Eq.~\eqref{eq:povm_marginal}, $\mathcal L_j(M)$ equals the single-parameter loss of $M_j$, hence $\mathcal L_j^*\le\mathcal L_j(M)$.
Multiplying by $w_j$, summing over $j$, and taking the minimum over $M$ proves the claim.
\end{proof}

Finally, we recall the incompatibility quantifier introduced in Ref.~\cite{Albarelli2026c},
\begin{equation}
    \mathcal I
    =
    \frac{\mathcal L_{\opt}}{\mathcal L_{\IF}}-1.
    \label{eq:incompatibility_ratio}
\end{equation}
This may be interpreted as the fractional increase in the minimum loss due to measurement incompatibility.
In the next section, we present our main result in terms of both $\mathcal L$ and $\mathcal I$.

\section{Incompatibility bounds for additive losses}
\label{sec:additive-losses}

In this section, we establish an upper bound on the effect of measurement incompatibility for a broad class of the additive losses introduced in Sec.~\ref{sec:bayesian}.
We begin with bounded losses and formulate the relevant condition in terms of \emph{score functions}.

A score $\sigma:Z\times Z\to[0,1]$, for a generic parameter space $Z$, quantifies the agreement between a true value and an estimate, with larger values corresponding to better agreement.
Consider positive scores, namely, those admitting a self-convolution factorisation
\begin{equation}
    \label{eq:sigma_score_def}
    \sigma(z,z')
    =
    \int \dd t\,\mu(t) h(z,t)h(z',t)
    \equiv
    (h\star h)(z,z'),
\end{equation}
with measure $\mu$ and a real and bounded function $h$, and losses of the form
\begin{equation}
    \ell_j(\tilde{\theta}_j,\theta_j)
    =
    c_j[1-\sigma_j(\tilde{\theta}_j,\theta_j)],
    \quad c_j>0.
    \label{eq:score_loss}
\end{equation}
For this class, minimising each contribution to the mean loss is equivalent to maximising the corresponding expected score,
\begin{equation}
    \varsigma_j(M)
    \coloneq
    \int\dd\boldsymbol{\theta}\,
    \dd\boldsymbol{\tilde{\theta}}\,
    p(\boldsymbol{\theta})
    \Tr[
        \rho(\boldsymbol{\theta})
        M(\boldsymbol{\tilde{\theta}})
    ]
    \sigma_j(\tilde{\theta}_j,\theta_j).
    \label{eq:expected_score}
\end{equation}

We next invoke the generalised BK inequality~\cite{Mishra2025}, which relates the optimal expected score to that attained by the square-root measurement, also known as PGM~\cite{Belavkin1975,Hausladen1994}.
Let
\begin{equation}
    \rho
    =
    \int\dd\boldsymbol{\theta}\,
    p(\boldsymbol{\theta})
    \rho(\boldsymbol{\theta})
\end{equation}
be the prior-averaged state, with $\rho\succ0$ (otherwise, all expressions are understood to be restricted to $\operatorname{supp}\rho$).
The corresponding PGM is
\begin{equation}
    M_{\PGM}(\boldsymbol{\tilde{\theta}})
    =
    p(\boldsymbol{\tilde{\theta}})
    \rho^{-1/2}
    \rho(\boldsymbol{\tilde{\theta}})
    \rho^{-1/2}.
    \label{eq:pgm}
\end{equation}
For a positive score $\sigma$, let $\varsigma^*$ denote its optimal expected value and $\varsigma(M_{\PGM})$ that attained by the PGM.
The generalised BK inequality then states that
\begin{equation}
    \varsigma^*
    \leq
    \sqrt{\varsigma(M_{\PGM})}.
    \label{eq:generalised_BK}
\end{equation}
We can now apply this result independently to each contribution to the additive loss, i.e., to each marginal of the PGM.
A similar idea has been exploited before in a different context~\cite{Lin2025a,Grewal2026a}.

\begin{theorem}\label{thm:factor_two}
Suppose that, for every $j=1,\ldots,d$, there is a constant $c_j>0$ and a positive score $\sigma_j:\Omega_j\times\Omega_j\to[0,1]$ such that Eq.~\eqref{eq:score_loss} holds.
Then
\begin{equation}
    \mathcal L_{\IF}
    \leq
    \mathcal L_{\opt}
    \leq
    \mathcal L(M_{\PGM})
    \leq
    2\mathcal L_{\IF}.
    \label{eq:factor_two}
\end{equation}
Consequently, whenever $\mathcal L_{\IF}>0$, the incompatibility quantifier defined in Eq.~\eqref{eq:incompatibility_ratio} satisfies
\begin{equation}
    0\leq\mathcal I\leq1.
\end{equation}
\end{theorem}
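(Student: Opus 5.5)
The chain in Eq.~\eqref{eq:factor_two} has three inequalities, and I will handle them in order. The first, $\mathcal L_{\IF}\leq\mathcal L_{\opt}$, is exactly Lemma~\ref{lem:if-benchmark}. The second, $\mathcal L_{\opt}\leq\mathcal L(M_{\PGM})$, is immediate because $M_{\PGM}$ in Eq.~\eqref{eq:pgm} is a valid POVM: its elements are positive and $\int\dd\bs{\tilde\theta}\,M_{\PGM}(\bs{\tilde\theta})=\rho^{-1/2}\rho\rho^{-1/2}=\mathbb 1$ on $\operatorname{supp}\rho$. All the work is in the third inequality. I will prove it parameter by parameter, showing $\mathcal L_j(M_{\PGM})\leq 2\mathcal L_j^*$ for each $j$, then multiplying by $w_j\geq0$ and summing.

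Fix $j$. By Eq.~\eqref{eq:score_loss} and normalisation of $p$ and of the POVM, $\mathcal L_j(M)=c_j[1-\varsigma_j(M)]$ for any joint $M$. By the marginal argument around Eq.~\eqref{eq:povm_marginal}, the optimum over POVMs on $\Omega_j$ equals the optimum over joint POVMs, so $\mathcal L_j^*=c_j(1-\varsigma_j^*)$ with $\varsigma_j^*=\sup_M\varsigma_j(M)$.

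The key step is to apply the generalised BK inequality~\eqref{eq:generalised_BK} to the score $\sigma_j$ viewed as a score on the full space $\Omega$, namely $\hat\sigma_j(\bs{\tilde\theta},\bs\theta)\coloneqq\sigma_j(\tilde\theta_j,\theta_j)$. This score is still positive: it factorises as in Eq.~\eqref{eq:sigma_score_def} with $\hat h(\bs z,t)=h_j(z_j,t)$, which is real and bounded, and the same measure $\mu_j$. Because~\eqref{eq:generalised_BK} is stated for the PGM of the full ensemble $\{p(\bs\theta),\rho(\bs\theta)\}$ with hypotheses and estimates in $\Omega$, this lift lets the same PGM serve for every $j$. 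One can check that the marginal of $M_{\PGM}$ is precisely the PGM of the averaged family~\eqref{eq:averaged_states}, but the lift avoids needing that.

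The BK inequality then gives $\varsigma_j^*\leq\sqrt{\varsigma_j(M_{\PGM})}$, i.e.\ $\varsigma_j(M_{\PGM})\geq(\varsigma_j^*)^2$. Since $\varsigma_j^*\in[0,1]$, we have $1-x^2=(1-x)(1+x)\leq2(1-x)$ for $x\in[0,1]$, so
\begin{equation*}
\mathcal L_j(M_{\PGM})=c_j[1-\varsigma_j(M_{\PGM})]\leq c_j[1-(\varsigma_j^*)^2]\leq 2c_j(1-\varsigma_j^*)=2\mathcal L_j^*.
\end{equation*}
Summing with weights $w_j$ gives $\mathcal L(M_{\PGM})\leq2\mathcal L_{\IF}$.

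For the corollary on $\mathcal I$: when $\mathcal L_{\IF}>0$, dividing the chain by $\mathcal L_{\IF}$ gives $1\leq\mathcal L_{\opt}/\mathcal L_{\IF}\leq2$, hence $0\leq\mathcal I\leq1$. The step I expect to need the most care is the lifting of $\sigma_j$ to $\Omega$. There I must verify that the hypotheses of the Mishra--Lami--Wilde result, positivity and boundedness of $h$ and the form of the PGM, hold for the lifted score. I must also confirm that the optimum $\varsigma_j^*$ over joint POVMs coincides with the single-parameter optimum defining $\mathcal L_j^*$, which follows from the two-way marginal correspondence stated before Eq.~\eqref{eq:LIF}.
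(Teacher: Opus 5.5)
Your proposal is correct and follows the paper's proof essentially step for step: you lift $\sigma_j$ to $\Omega$ via $h_j(z_j,t)$, identify the joint and single-parameter optimal scores through the two-way marginal correspondence, apply the generalised Barnum--Knill inequality to the common PGM, and close with $1-x^2\leq 2(1-x)$ on $[0,1]$. Your side remark that the $j$th marginal of $M_{\PGM}$ is the PGM of the averaged family is also correct, though, as you note, it is not needed.
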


\begin{proof}
The first two inequalities in Eq.~\eqref{eq:factor_two} follow, respectively, from Lem.~\ref{lem:if-benchmark} and from the fact that $M_{\PGM}$ is a feasible joint measurement.
It remains to prove the last inequality.

Fix $j$ and extend $\sigma_j$ to a score on the full parameter space according to
\begin{equation}
    \sigma'_j
    (\boldsymbol{\tilde{\theta}},\boldsymbol{\theta})
    =
    \sigma_j(\tilde{\theta}_j,\theta_j).
\end{equation}
This construction extends $\sigma_j$ from $\Omega_j$ to $\Omega$ by assigning the same score to any two parameter vectors that share the same $j$th components, irrespective of their remaining components.
If $\sigma_j=h_j\star h_j$, then 
\begin{equation}
    \sigma'_j=h'_j\star h'_j,
    \quad
    h'_j(\boldsymbol{\tilde{\theta}},t)
    =
    h_j(\tilde{\theta}_j,t).
\end{equation}
Thus $\sigma'_j$ is a positive score on $\Omega$.

Now, let $\varsigma_j^*$ denote the optimal expected score for estimating the single component $\theta_j$.
Equivalently, $\varsigma_j^*$ is the optimal expected value of the extended score $\sigma'_j$, since an $\Omega$-outcome POVM contributes only through its $j$th marginal, while any $\Omega_j$-outcome POVM can be extended to an $\Omega$-outcome POVM by adding arbitrary classical randomness in the remaining coordinates.

Applying Eq.~\eqref{eq:generalised_BK} to $\sigma'_j$ gives
\begin{equation}
    \varsigma_j^*
    \leq
    \sqrt{\varsigma_j(M_{\PGM})}.
\end{equation}
Using Eq.~\eqref{eq:score_loss},
\begin{align}
    \mathcal L_j(M_{\PGM})
    &=
    c_j[1-\varsigma_j(M_{\PGM})]
    \nonumber\\
    &\leq
    c_j[1-(\varsigma_j^*)^2]
    \nonumber\\
    &=
    c_j(1-\varsigma_j^*)
    (1+\varsigma_j^*)
    \nonumber\\
    &\leq
    2c_j(1-\varsigma_j^*)
    =
    2\mathcal L_j^*.
    \label{eq:ineq_chain_PGM_factortwo}
\end{align}
Multiplying by $w_j$ and summing over $j$ gives
\begin{equation}
    \mathcal L(M_{\PGM})
    \leq
    2\mathcal L_{\IF},
\end{equation}
which completes the proof.
\end{proof}

\begin{remark}[Optimised post-processing]
\label{remark:optimised-post-procesing}
The bounds above use the PGM with its canonical outcome labels $\boldsymbol{\tilde{\theta}}$.
If the same quantum POVM is followed by an optimal classical estimator for the chosen loss, the resulting mean loss can only decrease.
Hence, Eq.~\eqref{eq:factor_two} remains valid, a fortiori, when the PGM is followed by optimal classical post-processing.
\end{remark}

\section{Holevo losses in multiphase estimation}
\label{sec:multiphase}

We now apply Thm.~\ref{thm:factor_two} to the paradigmatic case of multiphase estimation~\cite{Macchiavello2003,Barbieri2024a}, where the loss must respect the periodicity of the unknown phases~\cite{Demkowicz-Dobrzanski2011,Holevo2011b}.

For phase parameters $\boldsymbol{\theta}\in\mathbb T^d$, where $\mathbb T=\mathbb R/(2\pi\mathbb Z)$ denotes the circle, consider the additive loss
\begin{equation}
    \ell(\boldsymbol{\tilde{\theta}},\boldsymbol{\theta})
    =
    \sum_{j=1}^d w_j g_j(\tilde{\theta}_j-\theta_j),
    \label{eq:holevo_loss}
\end{equation}
where each $g_j(z)$ is nonnegative, even, continuous, and $2\pi$-periodic in $z$.
In covariant estimation~\cite{Macchiavello2003, Demkowicz-Dobrzanski2015a}, an important class of such losses is defined by Holevo's condition~\cite[Thm.~4.4.1]{Holevo2011b}.
This condition requires the nonconstant Fourier coefficients of each $g_j$ to be nonpositive.
Explicitly, defining
\begin{equation}
    \hat g_j(k)
    =
    \frac{1}{2\pi}\int_0^{2\pi}\dd z\,g_j(z)e^{-ikz},
    \quad k\in\mathbb Z,
\end{equation}
the condition reads
\begin{equation}
    \hat g_j(k)
    =
    \frac{1}{2\pi}\int_0^{2\pi}\dd z\,g_j(z)\cos(kz)
    \leq0,\quad k\neq0.
    \label{eq:holevo_condition}
\end{equation}
We have used that $g_j$ is real and even, so the sine terms in the integral vanish, leaving only the cosine terms, with real Fourier coefficients satisfying $\hat g_j(-k)=\hat g_j(k)$.
The constant coefficient $\hat g_j(0)$, which gives the average of $g_j$ over one period, is not constrained by this condition.

Eq.~\eqref{eq:holevo_condition} also guarantees the positivity required by our incompatibility bound.
To prove this, choose $c_j>0$ such that $c_j\geq\max_z g_j(z)$ and define the score functions
\begin{equation}
    \sigma_j(\tilde{\theta}_j,\theta_j)
    =1-\frac{g_j(\tilde{\theta}_j-\theta_j)}{c_j}.
    \label{eq:holevo_score}
\end{equation}
As functions of $\tilde{\theta}_j-\theta_j$, these scores have Fourier coefficients
\begin{equation}
    a_{j,k}=\delta_{k0}-\frac{\hat g_j(k)}{c_j}\geq0.
\end{equation}
For $k\neq0$, their nonnegativity follows from Eq.~\eqref{eq:holevo_condition}; for $k=0$, it follows by construction because the average $\hat g_j(0)$ cannot exceed $\max_z g_j(z)$.

The required self-convolution property can be shown explicitly by introducing the following real-valued functions:
\begin{equation}
\begin{aligned}
    h_j(\varphi,0)&=\sqrt{a_{j,0}},\\
    h_j(\varphi,2k-1)&=\sqrt{2a_{j,k}}\cos(k\varphi),\\
    h_j(\varphi,2k)&=\sqrt{2a_{j,k}}\sin(k\varphi),
    \quad k\geq1.
\end{aligned}
\end{equation}
The trigonometric identity 
\begin{equation}
    \cos[k(\varphi-\psi)]=\cos(k\varphi)\cos(k\psi)+\sin(k\varphi)\sin(k\psi)
\end{equation}
gives
\begin{equation}
\begin{aligned}
    \sigma_j(\varphi,\psi)
    &=a_{j,0}+2\sum_{k\geq1}a_{j,k}\cos[k(\varphi-\psi)]\\
    &=\sum_{\alpha=0}^{\infty}h_j(\varphi,\alpha)h_j(\psi,\alpha)
    =(h_j\star h_j)(\varphi,\psi).
\end{aligned}
\end{equation}
Here, the variable $t$ in Eq.~\eqref{eq:sigma_score_def} becomes the index $\alpha=0,1,2,\ldots$ labelling the $h$-functions above, while the measure assigns unit weight to each value, so $\int \dd t\,\mu(t)$ becomes $\sum_{\alpha=0}^{\infty}$.
Continuity and nonnegative Fourier coefficients ensure that 
\begin{equation}
    \sum_{k\in\mathbb Z}a_{j,k}=\sigma_j(\varphi,\varphi)\leq1,
\end{equation}
so the sums converge and the functions $h_j$ are bounded.
Thus, each $\sigma_j$ is a positive score.

Consequently, Eq.~\eqref{eq:factor_two} applies to additive Holevo losses.
This is a key metrological result of this work.
Importantly, this conclusion holds for arbitrary priors, even though covariant phase measurements are typically associated with uniform priors over $[0,2\pi)$ for each phase, corresponding to maximum ignorance.

As a particular example, consider the sine loss~\cite{Demkowicz-Dobrzanski2011,Macchiavello2003}
\begin{equation}
    \ell(\boldsymbol{\tilde{\theta}},\boldsymbol{\theta})
    =
    4\sum_{j=1}^d w_j
    \sin^2\left(\frac{\tilde{\theta}_j-\theta_j}{2}\right).
    \label{eq:additive_ese}
\end{equation}
This loss is sometimes referred to as the Holevo cost function, and the corresponding mean loss as the Collett phase variance (CPV)~\cite{Collett1993,Berry2009}, which we denote by $\mathcal L_{\mathrm{CPV}}$.

We see that $g_j(z)=4\sin^2(z/2)$ satisfies Eq.~\eqref{eq:holevo_condition}, and choosing $c_j=4$ gives the positive score
\begin{equation}
    \sigma_j(\tilde{\theta}_j,\theta_j)
    =
    \cos^2\left(\frac{\tilde{\theta}_j-\theta_j}{2}\right).
    \label{eq:sine_positive_score}
\end{equation}
Thus, Eq.~\eqref{eq:factor_two} applies to the CPV, which, denoting by $\mathcal L_{\mathrm{CPV},\IF}$ its incompatibility-free benchmark, can be written as
\begin{equation}
    \mathcal L_{\mathrm{CPV},\IF}
    \leq
    \mathcal L_{\mathrm{CPV},\opt}
    \leq
    \mathcal L_{\mathrm{CPV}}(M_{\PGM})
    \leq
    2\mathcal L_{\mathrm{CPV},\IF}.
\end{equation}

For this loss, $\mathcal L_{\mathrm{CPV},\IF}$ can be evaluated explicitly, as in the case of quadratic loss~\cite{Personick1971,Rubio2020a}, using the single-phase result of Ref.~\cite{Demkowicz-Dobrzanski2011}.
Defining the first circular moment operators
\begin{equation}
    \bar{\rho}_j
    =
    \int_{\mathbb T^d}\dd\boldsymbol{\theta}\,
    p(\boldsymbol{\theta})
    \rho(\boldsymbol{\theta})e^{-i\theta_j},
    \label{eq:first_circular_moment}
\end{equation}
the minimum CPV for estimating each parameter separately is
\begin{equation}
    \mathcal L_{\mathrm{CPV},j}^*
    =
    2(1-\|\bar{\rho}_j\|_1),
\end{equation}
where $\|\cdot\|_1$ denotes the trace norm.
Consequently,
\begin{equation}
    \mathcal L_{\mathrm{CPV},\IF}
    =
    2\sum_{j=1}^d w_j(1-\|\bar{\rho}_j\|_1).
    \label{eq:ese_ni_explicit}
\end{equation}

\section{Extension to unbounded losses}
\label{sec:unbounded}

While fairly general, losses of the form in Eq.~\eqref{eq:score_loss} are bounded by $c_j$, whereas quadratic losses can grow arbitrarily large.
Hence, we next extend the argument above to unbounded losses by introducing positive exponential scores.
The following theorem follows from a straightforward extension of a construction introduced in Ref.~\cite{Mishra2025} to deal with quadratic loss.

\begin{theorem}
\label{th:unbounded}
Suppose that, for every $j=1,\ldots,d$ and every $\kappa>0$,
\begin{equation}
    \sigma_{\kappa,j}(\tilde{\theta}_j,\theta_j)
    =
    \exp[
        -\kappa\ell_j(\tilde{\theta}_j,\theta_j)
    ]
    \label{eq:exponential_score}
\end{equation}
is a positive score on $\Omega_j$. 
Then the bound in Eq.~\eqref{eq:factor_two} also holds for this additive loss.
The statement is nontrivial when $\mathcal L_{\IF}<\infty$; otherwise the upper bound is void.
\end{theorem}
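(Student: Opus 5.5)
The plan is to reduce to Thm.~\ref{thm:factor_two} through a family of bounded surrogate losses indexed by $\kappa$, and then let $\kappa\to0$. As in Thm.~\ref{thm:factor_two}, the first two inequalities of Eq.~\eqref{eq:factor_two} follow from Lem.~\ref{lem:if-benchmark} and feasibility of $M_{\PGM}$. So only $\mathcal L(M_{\PGM})\le 2\mathcal L_{\IF}$ needs proof. I will show it componentwise: $\mathcal L_j(M_{\PGM})\le 2\mathcal L_j^*$ for each $j$, then multiply by $w_j$ and sum. Throughout I assume $\mathcal L_j^*<\infty$, otherwise there is nothing to prove.

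Fix $j$ and $\kappa>0$, and set $\ell_{\kappa,j}=\kappa^{-1}[1-\sigma_{\kappa,j}]=\kappa^{-1}(1-e^{-\kappa\ell_j})$. This has exactly the form of Eq.~\eqref{eq:score_loss} with $c_j=\kappa^{-1}$ and a positive score. Exactly as in the proof of Thm.~\ref{thm:factor_two}, I extend the score to $\Omega$ and apply Eq.~\eqref{eq:generalised_BK} to the $j$th marginal of the PGM. Writing $\varsigma^*_{\kappa,j}$ for the optimal expected score, this gives
\begin{equation}
\mathcal L_{\kappa,j}(M_{\PGM})\le \kappa^{-1}\bigl[1-(\varsigma^*_{\kappa,j})^2\bigr]=\mathcal L^*_{\kappa,j}\,(1+\varsigma^*_{\kappa,j}).
\end{equation}
I keep the factor $(1+\varsigma^*_{\kappa,j})\le2$ explicit, even though bounding it by $2$ suffices.

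Next I compare the surrogate with the true loss. Since $1-e^{-x}\le x$, we have $\ell_{\kappa,j}\le\ell_j$ pointwise, hence $\mathcal L^*_{\kappa,j}\le\mathcal L_j^*$. Combining, $\mathcal L_{\kappa,j}(M_{\PGM})\le 2\mathcal L_j^*$ for every $\kappa>0$.

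Finally I let $\kappa\to0^+$. The function $\kappa\mapsto\kappa^{-1}(1-e^{-\kappa x})$ is nonnegative and increases monotonically to $x$ as $\kappa\downarrow0$. This holds for each $x\ge0$, and also for $x=+\infty$ if $\ell_j$ takes infinite values. The monotone convergence theorem, applied to the measure $p(\bs\theta)\Tr[\rho(\bs\theta)M_{\PGM}(\bs{\tilde\theta})]\dd\bs\theta\dd\bs{\tilde\theta}$, then gives $\mathcal L_{\kappa,j}(M_{\PGM})\uparrow\mathcal L_j(M_{\PGM})$. Hence $\mathcal L_j(M_{\PGM})\le2\mathcal L_j^*$, which completes the argument.

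The step needing most care is this limit, together with the regularity of the generalised BK inequality for each $\kappa$. The monotonicity is what makes the limit work: it lets me avoid any uniform control and avoid any convergence of the optimisers $\varsigma^*_{\kappa,j}$. Note also that we never need $\mathcal L^*_{\kappa,j}\to\mathcal L_j^*$, only the one-sided pointwise inequality $\ell_{\kappa,j}\le\ell_j$. A minor point to check is that $\sigma_{\kappa,j}$ takes values in $[0,1]$, which holds because $\ell_j\ge0$. The hypothesis of positivity for all $\kappa$, rather than a single $\kappa$, is exactly what the limiting argument requires.
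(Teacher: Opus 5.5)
Your proof is correct and is essentially the paper's argument: the bounded surrogate losses $q_\kappa(\ell_j)=\kappa^{-1}(1-e^{-\kappa\ell_j})$, the generalised BK inequality applied to each extended component, and monotone convergence as $\kappa\downarrow0$. The only cosmetic difference is on the right-hand side. The paper fixes an arbitrary $M_j$, passes to the limit on both sides and then minimises over $M_j$. You instead bound the surrogate optimum directly via $q_\kappa(x)\le x$, so $\mathcal L^*_{\kappa,j}\le\mathcal L_j^*$, and need monotone convergence only for $M_{\PGM}$.
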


\begin{proof}
As before, it suffices to prove
$\mathcal L_j(M_{\PGM})\leq2\mathcal L_j^*$ for each $j$.
For later convenience, define
\begin{equation}
    q_\kappa(x)
    \coloneq
    \frac{1-\exp(-\kappa x)}{\kappa},
    \quad
    \kappa>0,
\end{equation}
which satisfies
\begin{equation}
    q_\kappa(x)\uparrow x
    \quad
    (\kappa\downarrow0)
\end{equation}
for every $x\geq0$.

Fix a component $j$, a value $\kappa>0$, and an arbitrary POVM $M_j$ with outcomes in $\Omega_j$.
Extend $M_j$ to a POVM on $\Omega$ and consider the positive score
\begin{equation}
    \hat{\sigma}_{\kappa,j}
    (\boldsymbol{\tilde{\theta}},\boldsymbol{\theta})
    =
    \exp[
        -\kappa
        \ell_j(\tilde{\theta}_j,\theta_j)
    ].
\end{equation}

Let $\varsigma_{\kappa,j}(M_{\PGM})$ and $\varsigma_{\kappa,j}(M_j)$ denote the corresponding expected scores of $M_{\PGM}$ and $M_j$, respectively.
We have that
\begin{equation}
    \varsigma_{\kappa,j}(M_j)
    \leq
    \varsigma_{\kappa,j}^*.
\end{equation}
Applying the generalised BK inequality then gives
\begin{equation}
    \varsigma_{\kappa,j}^*
    \leq
    \sqrt{\varsigma_{\kappa,j}(M_{\PGM})},
\end{equation}
and hence
\begin{equation}
    \varsigma_{\kappa,j}(M_j)
    \leq
    \sqrt{\varsigma_{\kappa,j}(M_{\PGM})}.
\end{equation}
Therefore,
\begin{align}
    1-\varsigma_{\kappa,j}(M_{\PGM})
    &\leq
    1-\varsigma_{\kappa,j}(M_j)^2
    \nonumber\\
    &=
    [1-\varsigma_{\kappa,j}(M_j)]
    [1+\varsigma_{\kappa,j}(M_j)]
    \nonumber\\
    &\leq
    2[1-\varsigma_{\kappa,j}(M_j)].
    \label{eq:deficit_bound}
\end{align}

Dividing Eq.~\eqref{eq:deficit_bound} by $\kappa$ and using the definition of $q_\kappa$ gives
\begin{align}
&
    \int\dd\boldsymbol{\theta}
    \dd\boldsymbol{\tilde{\theta}}\,
    p(\boldsymbol{\theta})
    \Tr[
        \rho(\boldsymbol{\theta})
        M_{\PGM}(\boldsymbol{\tilde{\theta}})
    ]
    q_\kappa[
        \ell_j(\tilde{\theta}_j,\theta_j)
    ]
    \nonumber\\
&\quad\leq
    2
    \int\dd\boldsymbol{\theta}
    \dd\tilde{\theta}_j\,
    p(\boldsymbol{\theta})
    \Tr[
        \rho(\boldsymbol{\theta})
        M_j(\tilde{\theta}_j)
    ]
    q_\kappa[
        \ell_j(\tilde{\theta}_j,\theta_j)
    ].
\end{align}
Taking $\kappa\downarrow0$ and applying the monotone convergence theorem yields
\begin{equation}
    \mathcal L_j(M_{\PGM})
    \leq
    2\mathcal L_j(M_j).
\end{equation}
Minimising over $M_j$ gives
\begin{equation}
    \mathcal L_j(M_{\PGM})
    \leq
    2\mathcal L_j^*.
\end{equation}
Multiplying by $w_j$ and summing over $j$ proves
\begin{equation}
    \mathcal L(M_{\PGM})
    \leq
    2\mathcal L_{\IF},
\end{equation}
while the remaining inequalities in Eq.~\eqref{eq:factor_two} follow as before.
\end{proof}
We note that Thm.~\ref{th:unbounded} actually includes Thm.~\ref{thm:factor_two} as a special case: if $\sigma_j$ is a positive score, the exponential score $\exp[-\kappa c_j(1-\sigma_j)]$ is also positive for every $\kappa>0$, since its Taylor expansion in $\sigma_j$ has nonnegative coefficients and products of positive scores remain positive.
We have kept the two results separate for clarity, as the proof of Thm.~\ref{thm:factor_two} is more direct.
Moreover, for bounded losses the intermediate bound $\mathcal L_j(M_{\PGM})\leq2\mathcal L_j^*-(\mathcal L_j^*)^2/c_j$, obtained from the first inequality in Eq.~\eqref{eq:ineq_chain_PGM_factortwo}, is slightly tighter than $2\mathcal L_j^*$.

\begin{remark}[Negative-type losses]
A sufficient and necessary condition for Eq.~\eqref{eq:exponential_score} to be a positive score on $\Omega_j$ is that each $\ell_j$ is symmetric and conditionally negative definite in the sense of Schoenberg~\cite{Berg1984}.
In this case, $\exp(-\kappa\ell_j)$ is positive definite for every $\kappa>0$ and hence defines a positive score.
Squared Euclidean distances, and more generally squared Hilbert-space distances, provide standard examples.
\end{remark}

For $\ell_j(\tilde{\theta}_j,\theta_j)=(\tilde{\theta}_j-\theta_j)^2$, the exponential scores are positive for every $\kappa>0$.
Thus, Thm.~\ref{th:unbounded} recovers the result of Ref.~\cite{Albarelli2026c} for quadratic losses, in the form of Eq.~\eqref{eq:factor_two}.
This also extends to a nondiagonal weight matrix $W\succeq0$ through its spectral decomposition $W=\sum_j w_j\boldsymbol u_j\boldsymbol u_j^{\mathsf T}$, since the loss is diagonal in the rotated parameters $\boldsymbol u_j^{\mathsf T}\boldsymbol{\theta}$.

Finally, as an example of a nonquadratic unbounded loss, consider the additive absolute loss
\begin{equation}
    \ell(\boldsymbol{\tilde{\theta}},\boldsymbol{\theta})
    =
    \sum_{j=1}^d w_j
    |\tilde{\theta}_j-\theta_j|.
\end{equation}
Its exponential scores are the Laplace kernels
\begin{equation}
    \sigma_{\kappa,j}(\tilde{\theta}_j,\theta_j)
    =
    e^{-\kappa|\tilde{\theta}_j-\theta_j|},
\end{equation}
which are positive definite, as follows from their nonnegative Fourier transforms. 
The bound in Eq.~\eqref{eq:factor_two} therefore follows from Thm.~\ref{th:unbounded}.
Absolute losses, whose associated optimal Bayesian estimator is the posterior median, are motivated by their robustness to outliers, as their linear penalty avoids assigning disproportionate weight to large deviations~\cite{Jaynes2003a}.

\section{Concluding remarks}
\label{sec:conclusions}

We have shown that the factor-of-two limitation on measurement incompatibility in Bayesian multiparameter estimation is not specific to quadratic loss, but can follow more generally from the positivity of the score functions associated with additive losses.
Crucially, this establishes the corresponding incompatibility bound for Holevo's family of periodic losses in Bayesian multiphase estimation with arbitrary prior information, despite the absence of a globally equivalent quadratic description.
The same argument covers unbounded losses, such as absolute and quadratic losses.

Our benchmark $\mathcal L_{\IF}$ is obtained from single-parameter estimation problems in which the remaining components of $\boldsymbol{\theta}$ act as nuisance parameters.
This parallels using the Helstrom SLD-CRB as a benchmark for incompatibility in local QET~\cite{Belliardo2021}, since, for a diagonal weight matrix, it is the sum of the nuisance-parameter bounds $(F_Q^{-1})_{jj}$, where $F_Q$ is the SLD quantum Fisher information matrix~\cite{Suzuki2020b}.
An alternative notion of incompatibility takes as the benchmark ideal single-parameter experiments in which all the other parameters are perfectly known.
It was introduced in Ref.~\cite{Ragy2016} for local QET (also including probe selection for channel estimation problems), and in that setting it amounts to the requirement that the single-parameter CRBs $1/(F_Q)_{jj}$ are attained simultaneously for all $\theta_j$.
Thus, a compatible model is not just one without measurement incompatibility, but one where $F_Q$ is also diagonal, so that the uncertainty about one parameter does not affect the estimation of the others.
A Bayesian counterpart of this concept can also be defined~\footnote{R.~Demkowicz-Dobrza\'nski, private communication; the argument will appear in a forthcoming book.};
compatibility in this sense is a stronger requirement, which implies the absence of measurement incompatibility according to our definition, while the converse does not hold.
Although this alternative notion may provide a unified view of precision trade-offs in multiparameter estimation, it does not isolate measurement incompatibility per se, since the additional cost of not knowing the other parameters is already present in classical models.

As an outlook, it would be interesting to understand whether the impact of measurement incompatibility can be bounded even beyond the class of positive scores required by the generalised BK inequality; this will likely require going beyond the PGM-based approach used here~\cite{Grewal2026a}.
Answering this question would be relevant, for example, for extending to multiple parameters the learning framework of Refs.~\cite{Meyer2025,Belliardo2026b}, which relies on a score that assigns the value one if the estimate is within a prescribed tolerance of the true value and zero otherwise.
From a more general point of view, our work provides further evidence of a fruitful interplay between parameter estimation and hypothesis testing.

\begin{acknowledgments}
We gratefully acknowledge insightful discussions with Dominic Branford, Rafał Demkowicz-Dobrzański, and Ludovico Lami.
SC was supported by the PRIN project
PRIN22-RISQUE-2022T25TR3 of the Italian Ministry of University and Research (MUR).
JR acknowledges financial support from the UKRI-JST project ``Quantum Control \& Sensing: Enhancing high-precision quantum sensing in noisy environments via optimised control''.

\bigskip

\emph{ChatGPT (OpenAI) and Claude (Anthropic) were used in conversational mode during the preparation of this manuscript.
The authors independently developed and verified the technical results and take full responsibility for the final content of the manuscript.}

\vfill

\end{acknowledgments}

\bibliography{2026BayesianMulti}

\begin{thebibliography}{56}%
\makeatletter
\providecommand \@ifxundefined [1]{%
 \@ifx{#1\undefined}
}%
\providecommand \@ifnum [1]{%
 \ifnum #1\expandafter \@firstoftwo
 \else \expandafter \@secondoftwo
 \fi
}%
\providecommand \@ifx [1]{%
 \ifx #1\expandafter \@firstoftwo
 \else \expandafter \@secondoftwo
 \fi
}%
\providecommand \natexlab [1]{#1}%
\providecommand \enquote  [1]{``#1''}%
\providecommand \bibnamefont  [1]{#1}%
\providecommand \bibfnamefont [1]{#1}%
\providecommand \citenamefont [1]{#1}%
\providecommand \href@noop [0]{\@secondoftwo}%
\providecommand \href [0]{\begingroup \@sanitize@url \@href}%
\providecommand \@href[1]{\@@startlink{#1}\@@href}%
\providecommand \@@href[1]{\endgroup#1\@@endlink}%
\providecommand \@sanitize@url [0]{\catcode `\\12\catcode `\$12\catcode `\&12\catcode `\#12\catcode `\^12\catcode `\_12\catcode `\%12\relax}%
\providecommand \@@startlink[1]{}%
\providecommand \@@endlink[0]{}%
\providecommand \url  [0]{\begingroup\@sanitize@url \@url }%
\providecommand \@url [1]{\endgroup\@href {#1}{\urlprefix }}%
\providecommand \urlprefix  [0]{URL }%
\providecommand \Eprint [0]{\href }%
\providecommand \doibase [0]{https://doi.org/}%
\providecommand \selectlanguage [0]{\@gobble}%
\providecommand \bibinfo  [0]{\@secondoftwo}%
\providecommand \bibfield  [0]{\@secondoftwo}%
\providecommand \translation [1]{[#1]}%
\providecommand \BibitemOpen [0]{}%
\providecommand \bibitemStop [0]{}%
\providecommand \bibitemNoStop [0]{.\EOS\space}%
\providecommand \EOS [0]{\spacefactor3000\relax}%
\providecommand \BibitemShut  [1]{\csname bibitem#1\endcsname}%
\let\auto@bib@innerbib\@empty
\bibitem [{\citenamefont {Albarelli}\ \emph {et~al.}(2020)\citenamefont {Albarelli}, \citenamefont {Barbieri}, \citenamefont {Genoni},\ and\ \citenamefont {Gianani}}]{Albarelli2020a}%
  \BibitemOpen
  \bibfield  {author} {\bibinfo {author} {\bibfnamefont {F.}~\bibnamefont {Albarelli}}, \bibinfo {author} {\bibfnamefont {M.}~\bibnamefont {Barbieri}}, \bibinfo {author} {\bibfnamefont {M.~G.}\ \bibnamefont {Genoni}},\ and\ \bibinfo {author} {\bibfnamefont {I.}~\bibnamefont {Gianani}},\ }\bibfield  {title} {\bibinfo {title} {A perspective on multiparameter quantum metrology: {{From}} theoretical tools to applications in quantum imaging},\ }\href {https://doi.org/10.1016/j.physleta.2020.126311} {\bibfield  {journal} {\bibinfo  {journal} {Phys. Lett. A}\ }\textbf {\bibinfo {volume} {384}},\ \bibinfo {pages} {126311} (\bibinfo {year} {2020})},\ \Eprint {https://arxiv.org/abs/1911.12067} {arXiv:1911.12067} \BibitemShut {NoStop}%
\bibitem [{\citenamefont {{Demkowicz-Dobrza{\'n}ski}}\ \emph {et~al.}(2020)\citenamefont {{Demkowicz-Dobrza{\'n}ski}}, \citenamefont {G{\'o}recki},\ and\ \citenamefont {Gu{\c t}{\u a}}}]{Demkowicz-Dobrzanski2020}%
  \BibitemOpen
  \bibfield  {author} {\bibinfo {author} {\bibfnamefont {R.}~\bibnamefont {{Demkowicz-Dobrza{\'n}ski}}}, \bibinfo {author} {\bibfnamefont {W.}~\bibnamefont {G{\'o}recki}},\ and\ \bibinfo {author} {\bibfnamefont {M.}~\bibnamefont {Gu{\c t}{\u a}}},\ }\bibfield  {title} {\bibinfo {title} {Multi-parameter estimation beyond quantum {{Fisher}} information},\ }\href {https://doi.org/10.1088/1751-8121/ab8ef3} {\bibfield  {journal} {\bibinfo  {journal} {J. Phys. A}\ }\textbf {\bibinfo {volume} {53}},\ \bibinfo {pages} {363001} (\bibinfo {year} {2020})},\ \Eprint {https://arxiv.org/abs/2001.11742} {arXiv:2001.11742} \BibitemShut {NoStop}%
\bibitem [{\citenamefont {Pezz{\`e}}\ and\ \citenamefont {Smerzi}(2025)}]{Pezze2025}%
  \BibitemOpen
  \bibfield  {author} {\bibinfo {author} {\bibfnamefont {L.}~\bibnamefont {Pezz{\`e}}}\ and\ \bibinfo {author} {\bibfnamefont {A.}~\bibnamefont {Smerzi}},\ }\href {https://doi.org/10.48550/arXiv.2502.17396} {\bibinfo {title} {Advances in multiparameter quantum sensing and metrology}} (\bibinfo {year} {2025}),\ \Eprint {https://arxiv.org/abs/2502.17396} {arXiv:2502.17396} \BibitemShut {NoStop}%
\bibitem [{\citenamefont {Conlon}\ \emph {et~al.}(2026)\citenamefont {Conlon}, \citenamefont {Shajilal}, \citenamefont {Zhao}, \citenamefont {Ralph}, \citenamefont {Leuchs}, \citenamefont {Andersen}, \citenamefont {Assad}, \citenamefont {Lam},\ and\ \citenamefont {Xiao}}]{Conlon2026a}%
  \BibitemOpen
  \bibfield  {author} {\bibinfo {author} {\bibfnamefont {L.~O.}\ \bibnamefont {Conlon}}, \bibinfo {author} {\bibfnamefont {B.}~\bibnamefont {Shajilal}}, \bibinfo {author} {\bibfnamefont {J.}~\bibnamefont {Zhao}}, \bibinfo {author} {\bibfnamefont {T.~C.}\ \bibnamefont {Ralph}}, \bibinfo {author} {\bibfnamefont {G.}~\bibnamefont {Leuchs}}, \bibinfo {author} {\bibfnamefont {U.~L.}\ \bibnamefont {Andersen}}, \bibinfo {author} {\bibfnamefont {S.~M.}\ \bibnamefont {Assad}}, \bibinfo {author} {\bibfnamefont {P.~K.}\ \bibnamefont {Lam}},\ and\ \bibinfo {author} {\bibfnamefont {Y.}~\bibnamefont {Xiao}},\ }\href {https://doi.org/10.48550/arXiv.2606.07747} {\bibinfo {title} {\$100\textbackslash{{pm$\Delta$t}}\$ {{Years}} of {{Quantum Uncertainty}}: {{From Origins}} to {{Modern Insights}}}} (\bibinfo {year} {2026}),\ \Eprint {https://arxiv.org/abs/2606.07747} {arXiv:2606.07747} \BibitemShut {NoStop}%
\bibitem [{\citenamefont {Audenaert}\ and\ \citenamefont {Mosonyi}(2014)}]{Audenaert2014a}%
  \BibitemOpen
  \bibfield  {author} {\bibinfo {author} {\bibfnamefont {K.~M.~R.}\ \bibnamefont {Audenaert}}\ and\ \bibinfo {author} {\bibfnamefont {M.}~\bibnamefont {Mosonyi}},\ }\bibfield  {title} {\bibinfo {title} {Upper bounds on the error probabilities and asymptotic error exponents in quantum multiple state discrimination},\ }\href {https://doi.org/10.1063/1.4898559} {\bibfield  {journal} {\bibinfo  {journal} {J. Math. Phys.}\ }\textbf {\bibinfo {volume} {55}},\ \bibinfo {pages} {102201} (\bibinfo {year} {2014})},\ \Eprint {https://arxiv.org/abs/1401.7658} {arXiv:1401.7658} \BibitemShut {NoStop}%
\bibitem [{\citenamefont {Li}(2016)}]{liDiscriminatingQuantumStates2016}%
  \BibitemOpen
  \bibfield  {author} {\bibinfo {author} {\bibfnamefont {K.}~\bibnamefont {Li}},\ }\bibfield  {title} {\bibinfo {title} {Discriminating quantum states: {{The}} multiple {{Chernoff}} distance},\ }\href {https://doi.org/10.1214/16-AOS1436} {\bibfield  {journal} {\bibinfo  {journal} {Ann. Stat.}\ }\textbf {\bibinfo {volume} {44}},\ \bibinfo {pages} {1661} (\bibinfo {year} {2016})},\ \Eprint {https://arxiv.org/abs/1508.06624} {arXiv:1508.06624} \BibitemShut {NoStop}%
\bibitem [{\citenamefont {Cheng}\ and\ \citenamefont {Liu}(2026)}]{Cheng2026}%
  \BibitemOpen
  \bibfield  {author} {\bibinfo {author} {\bibfnamefont {H.-C.}\ \bibnamefont {Cheng}}\ and\ \bibinfo {author} {\bibfnamefont {P.-C.}\ \bibnamefont {Liu}},\ }\href {https://doi.org/10.48550/arXiv.2606.06246} {\bibinfo {title} {Multiple {{Quantum Hypothesis Testing}}: {{One-Shot Pairwise Bounds}} and {{Sharp Asymptotics}}}} (\bibinfo {year} {2026}),\ \Eprint {https://arxiv.org/abs/2606.06246} {arXiv:2606.06246} \BibitemShut {NoStop}%
\bibitem [{\citenamefont {Lee}\ and\ \citenamefont {Lami}(2026)}]{Lee2026b}%
  \BibitemOpen
  \bibfield  {author} {\bibinfo {author} {\bibfnamefont {K.-Y.}\ \bibnamefont {Lee}}\ and\ \bibinfo {author} {\bibfnamefont {L.}~\bibnamefont {Lami}},\ }\href {https://doi.org/10.48550/arXiv.2609.28440} {\bibinfo {title} {Sharp pairwise reduction for quantum hypothesis testing}} (\bibinfo {year} {2026}),\ \Eprint {https://arxiv.org/abs/2609.28440} {arXiv:2609.28440} \BibitemShut {NoStop}%
\bibitem [{\citenamefont {Belliardo}\ and\ \citenamefont {Giovannetti}(2021)}]{Belliardo2021}%
  \BibitemOpen
  \bibfield  {author} {\bibinfo {author} {\bibfnamefont {F.}~\bibnamefont {Belliardo}}\ and\ \bibinfo {author} {\bibfnamefont {V.}~\bibnamefont {Giovannetti}},\ }\bibfield  {title} {\bibinfo {title} {Incompatibility in quantum parameter estimation},\ }\href {https://doi.org/10.1088/1367-2630/ac04ca} {\bibfield  {journal} {\bibinfo  {journal} {New J. Phys.}\ }\textbf {\bibinfo {volume} {23}},\ \bibinfo {pages} {063055} (\bibinfo {year} {2021})},\ \Eprint {https://arxiv.org/abs/2102.13417} {arXiv:2102.13417} \BibitemShut {NoStop}%
\bibitem [{\citenamefont {Sidhu}\ \emph {et~al.}(2021)\citenamefont {Sidhu}, \citenamefont {Ouyang}, \citenamefont {Campbell},\ and\ \citenamefont {Kok}}]{Sidhu2021}%
  \BibitemOpen
  \bibfield  {author} {\bibinfo {author} {\bibfnamefont {J.~S.}\ \bibnamefont {Sidhu}}, \bibinfo {author} {\bibfnamefont {Y.}~\bibnamefont {Ouyang}}, \bibinfo {author} {\bibfnamefont {E.~T.}\ \bibnamefont {Campbell}},\ and\ \bibinfo {author} {\bibfnamefont {P.}~\bibnamefont {Kok}},\ }\bibfield  {title} {\bibinfo {title} {Tight {{Bounds}} on the {{Simultaneous Estimation}} of {{Incompatible Parameters}}},\ }\href {https://doi.org/10.1103/PhysRevX.11.011028} {\bibfield  {journal} {\bibinfo  {journal} {Phys. Rev. X}\ }\textbf {\bibinfo {volume} {11}},\ \bibinfo {pages} {011028} (\bibinfo {year} {2021})},\ \Eprint {https://arxiv.org/abs/1912.09218} {arXiv:1912.09218} \BibitemShut {NoStop}%
\bibitem [{\citenamefont {Candeloro}\ \emph {et~al.}(2021)\citenamefont {Candeloro}, \citenamefont {Paris},\ and\ \citenamefont {Genoni}}]{Candeloro2021c}%
  \BibitemOpen
  \bibfield  {author} {\bibinfo {author} {\bibfnamefont {A.}~\bibnamefont {Candeloro}}, \bibinfo {author} {\bibfnamefont {M.~G.~A.}\ \bibnamefont {Paris}},\ and\ \bibinfo {author} {\bibfnamefont {M.~G.}\ \bibnamefont {Genoni}},\ }\bibfield  {title} {\bibinfo {title} {On the properties of the asymptotic incompatibility measure in multiparameter quantum estimation},\ }\href {https://doi.org/10.1088/1751-8121/ac331e} {\bibfield  {journal} {\bibinfo  {journal} {J. Phys. A}\ }\textbf {\bibinfo {volume} {54}},\ \bibinfo {pages} {485301} (\bibinfo {year} {2021})},\ \Eprint {https://arxiv.org/abs/2107.13426} {arXiv:2107.13426} \BibitemShut {NoStop}%
\bibitem [{\citenamefont {Chen}\ \emph {et~al.}(2022)\citenamefont {Chen}, \citenamefont {Chen},\ and\ \citenamefont {Yuan}}]{Chen2022f}%
  \BibitemOpen
  \bibfield  {author} {\bibinfo {author} {\bibfnamefont {H.}~\bibnamefont {Chen}}, \bibinfo {author} {\bibfnamefont {Y.}~\bibnamefont {Chen}},\ and\ \bibinfo {author} {\bibfnamefont {H.}~\bibnamefont {Yuan}},\ }\bibfield  {title} {\bibinfo {title} {Incompatibility measures in multiparameter quantum estimation under hierarchical quantum measurements},\ }\href {https://doi.org/10.1103/PhysRevA.105.062442} {\bibfield  {journal} {\bibinfo  {journal} {Phys. Rev. A}\ }\textbf {\bibinfo {volume} {105}},\ \bibinfo {pages} {062442} (\bibinfo {year} {2022})},\ \Eprint {https://arxiv.org/abs/2109.05807} {arXiv:2109.05807} \BibitemShut {NoStop}%
\bibitem [{\citenamefont {Hayashi}\ and\ \citenamefont {Ouyang}(2023)}]{Hayashi2023a}%
  \BibitemOpen
  \bibfield  {author} {\bibinfo {author} {\bibfnamefont {M.}~\bibnamefont {Hayashi}}\ and\ \bibinfo {author} {\bibfnamefont {Y.}~\bibnamefont {Ouyang}},\ }\bibfield  {title} {\bibinfo {title} {Tight {{Cram\'er-Rao}} type bounds for multiparameter quantum metrology through conic programming},\ }\href {https://doi.org/10.22331/q-2023-08-29-1094} {\bibfield  {journal} {\bibinfo  {journal} {Quantum}\ }\textbf {\bibinfo {volume} {7}},\ \bibinfo {pages} {1094} (\bibinfo {year} {2023})},\ \Eprint {https://arxiv.org/abs/2209.05218} {arXiv:2209.05218} \BibitemShut {NoStop}%
\bibitem [{\citenamefont {Candeloro}\ \emph {et~al.}(2024)\citenamefont {Candeloro}, \citenamefont {Pazhotan},\ and\ \citenamefont {Paris}}]{Candeloro2024}%
  \BibitemOpen
  \bibfield  {author} {\bibinfo {author} {\bibfnamefont {A.}~\bibnamefont {Candeloro}}, \bibinfo {author} {\bibfnamefont {Z.}~\bibnamefont {Pazhotan}},\ and\ \bibinfo {author} {\bibfnamefont {M.~G.~A.}\ \bibnamefont {Paris}},\ }\bibfield  {title} {\bibinfo {title} {Dimension matters: Precision and incompatibility in multi-parameter quantum estimation models},\ }\href {https://doi.org/10.1088/2058-9565/ad7498} {\bibfield  {journal} {\bibinfo  {journal} {Quantum Sci. Technol.}\ }\textbf {\bibinfo {volume} {9}},\ \bibinfo {pages} {045045} (\bibinfo {year} {2024})},\ \Eprint {https://arxiv.org/abs/2403.07106} {arXiv:2403.07106} \BibitemShut {NoStop}%
\bibitem [{\citenamefont {Chen}\ \emph {et~al.}(2024)\citenamefont {Chen}, \citenamefont {Wang},\ and\ \citenamefont {Yuan}}]{Chen2024j}%
  \BibitemOpen
  \bibfield  {author} {\bibinfo {author} {\bibfnamefont {H.}~\bibnamefont {Chen}}, \bibinfo {author} {\bibfnamefont {L.}~\bibnamefont {Wang}},\ and\ \bibinfo {author} {\bibfnamefont {H.}~\bibnamefont {Yuan}},\ }\bibfield  {title} {\bibinfo {title} {Simultaneous measurement of multiple incompatible observables and tradeoff in multiparameter quantum estimation},\ }\href {https://doi.org/10.1038/s41534-024-00894-x} {\bibfield  {journal} {\bibinfo  {journal} {Npj Quantum Inf.}\ }\textbf {\bibinfo {volume} {10}},\ \bibinfo {pages} {98} (\bibinfo {year} {2024})},\ \Eprint {https://arxiv.org/abs/2310.11925} {arXiv:2310.11925} \BibitemShut {NoStop}%
\bibitem [{\citenamefont {Das}\ \emph {et~al.}(2025)\citenamefont {Das}, \citenamefont {Conlon}, \citenamefont {Suzuki}, \citenamefont {Yung}, \citenamefont {Lam},\ and\ \citenamefont {Assad}}]{Das2025b}%
  \BibitemOpen
  \bibfield  {author} {\bibinfo {author} {\bibfnamefont {A.}~\bibnamefont {Das}}, \bibinfo {author} {\bibfnamefont {L.~O.}\ \bibnamefont {Conlon}}, \bibinfo {author} {\bibfnamefont {J.}~\bibnamefont {Suzuki}}, \bibinfo {author} {\bibfnamefont {S.~K.}\ \bibnamefont {Yung}}, \bibinfo {author} {\bibfnamefont {P.~K.}\ \bibnamefont {Lam}},\ and\ \bibinfo {author} {\bibfnamefont {S.~M.}\ \bibnamefont {Assad}},\ }\bibfield  {title} {\bibinfo {title} {Holevo {{Cram\'er-Rao}} bound: {{How}} close can we get without entangling measurements?},\ }\href {https://doi.org/10.22331/q-2025-09-30-1867} {\bibfield  {journal} {\bibinfo  {journal} {Quantum}\ }\textbf {\bibinfo {volume} {9}},\ \bibinfo {pages} {1867} (\bibinfo {year} {2025})},\ \Eprint {https://arxiv.org/abs/2405.09622} {arXiv:2405.09622} \BibitemShut {NoStop}%
\bibitem [{\citenamefont {He}\ \emph {et~al.}(2026)\citenamefont {He}, \citenamefont {Fazio},\ and\ \citenamefont {Paris}}]{He2026d}%
  \BibitemOpen
  \bibfield  {author} {\bibinfo {author} {\bibfnamefont {J.}~\bibnamefont {He}}, \bibinfo {author} {\bibfnamefont {G.}~\bibnamefont {Fazio}},\ and\ \bibinfo {author} {\bibfnamefont {M.~G.~A.}\ \bibnamefont {Paris}},\ }\href {https://doi.org/10.48550/arXiv.2510.18864} {\bibinfo {title} {Weight-dependent and weight-independent measures of quantum incompatibility in multiparameter estimation}} (\bibinfo {year} {2026}),\ \Eprint {https://arxiv.org/abs/2510.18864} {arXiv:2510.18864} \BibitemShut {NoStop}%
\bibitem [{\citenamefont {He}\ and\ \citenamefont {Paris}(2026)}]{He2026c}%
  \BibitemOpen
  \bibfield  {author} {\bibinfo {author} {\bibfnamefont {J.}~\bibnamefont {He}}\ and\ \bibinfo {author} {\bibfnamefont {M.~G.~A.}\ \bibnamefont {Paris}},\ }\href {https://doi.org/10.48550/arXiv.2606.11343} {\bibinfo {title} {Fisher geometry reshapes the effect of incompatibility in multiparameter quantum estimation}} (\bibinfo {year} {2026}),\ \Eprint {https://arxiv.org/abs/2606.11343} {arXiv:2606.11343} \BibitemShut {NoStop}%
\bibitem [{\citenamefont {Wang}\ \emph {et~al.}(2026)\citenamefont {Wang}, \citenamefont {Chen},\ and\ \citenamefont {Yuan}}]{Wang2026n}%
  \BibitemOpen
  \bibfield  {author} {\bibinfo {author} {\bibfnamefont {L.}~\bibnamefont {Wang}}, \bibinfo {author} {\bibfnamefont {H.}~\bibnamefont {Chen}},\ and\ \bibinfo {author} {\bibfnamefont {H.}~\bibnamefont {Yuan}},\ }\bibfield  {title} {\bibinfo {title} {Minimal {{Trade-Off}} and {{Optimal Measurement}} for {{Multiparameter Quantum Estimation}}},\ }\href {https://doi.org/10.1103/dvj7-9thf} {\bibfield  {journal} {\bibinfo  {journal} {Phys. Rev. Lett.}\ }\textbf {\bibinfo {volume} {137}},\ \bibinfo {pages} {020804} (\bibinfo {year} {2026})},\ \Eprint {https://arxiv.org/abs/2605.23514} {arXiv:2605.23514} \BibitemShut {NoStop}%
\bibitem [{\citenamefont {Albarelli}\ \emph {et~al.}(2026)\citenamefont {Albarelli}, \citenamefont {Branford},\ and\ \citenamefont {Rubio}}]{Albarelli2026c}%
  \BibitemOpen
  \bibfield  {author} {\bibinfo {author} {\bibfnamefont {F.}~\bibnamefont {Albarelli}}, \bibinfo {author} {\bibfnamefont {D.}~\bibnamefont {Branford}},\ and\ \bibinfo {author} {\bibfnamefont {J.}~\bibnamefont {Rubio}},\ }\bibfield  {title} {\bibinfo {title} {Measurement incompatibility in {{Bayesian}} multiparameter quantum estimation},\ }\href {https://doi.org/10.22331/q-2026-08-13-2192} {\bibfield  {journal} {\bibinfo  {journal} {Quantum}\ }\textbf {\bibinfo {volume} {10}},\ \bibinfo {pages} {2192} (\bibinfo {year} {2026})},\ \Eprint {https://arxiv.org/abs/2511.16645} {arXiv:2511.16645} \BibitemShut {NoStop}%
\bibitem [{\citenamefont {Rubio}\ and\ \citenamefont {Dunningham}(2020)}]{Rubio2020a}%
  \BibitemOpen
  \bibfield  {author} {\bibinfo {author} {\bibfnamefont {J.}~\bibnamefont {Rubio}}\ and\ \bibinfo {author} {\bibfnamefont {J.~A.}\ \bibnamefont {Dunningham}},\ }\bibfield  {title} {\bibinfo {title} {Bayesian multiparameter quantum metrology with limited data},\ }\href {https://doi.org/10.1103/PhysRevA.101.032114} {\bibfield  {journal} {\bibinfo  {journal} {Phys. Rev. A}\ }\textbf {\bibinfo {volume} {101}},\ \bibinfo {pages} {032114} (\bibinfo {year} {2020})},\ \Eprint {https://arxiv.org/abs/1906.04123} {arXiv:1906.04123} \BibitemShut {NoStop}%
\bibitem [{\citenamefont {Sidhu}\ and\ \citenamefont {Kok}(2020)}]{Sidhu2020}%
  \BibitemOpen
  \bibfield  {author} {\bibinfo {author} {\bibfnamefont {J.~S.}\ \bibnamefont {Sidhu}}\ and\ \bibinfo {author} {\bibfnamefont {P.}~\bibnamefont {Kok}},\ }\bibfield  {title} {\bibinfo {title} {Geometric perspective on quantum parameter estimation},\ }\href {https://doi.org/10.1116/1.5119961} {\bibfield  {journal} {\bibinfo  {journal} {AVS Quantum Sci.}\ }\textbf {\bibinfo {volume} {2}},\ \bibinfo {pages} {014701} (\bibinfo {year} {2020})},\ \Eprint {https://arxiv.org/abs/1907.06628} {arXiv:1907.06628} \BibitemShut {NoStop}%
\bibitem [{\citenamefont {Helstrom}(1976)}]{Helstrom1976}%
  \BibitemOpen
  \bibfield  {author} {\bibinfo {author} {\bibfnamefont {C.~W.}\ \bibnamefont {Helstrom}},\ }\href@noop {} {\emph {\bibinfo {title} {Quantum Detection and Estimation Theory}}}\ (\bibinfo  {publisher} {Academic Press},\ \bibinfo {address} {New York},\ \bibinfo {year} {1976})\BibitemShut {NoStop}%
\bibitem [{\citenamefont {Holevo}(2011)}]{Holevo2011b}%
  \BibitemOpen
  \bibfield  {author} {\bibinfo {author} {\bibfnamefont {A.~S.}\ \bibnamefont {Holevo}},\ }\href {http://link.springer.com/10.1007/978-88-7642-378-9} {\emph {\bibinfo {title} {Probabilistic and {{Statistical Aspects}} of {{Quantum Theory}}}}},\ \bibinfo {edition} {2nd}\ ed.\ (\bibinfo  {publisher} {Edizioni della Normale},\ \bibinfo {address} {Pisa},\ \bibinfo {year} {2011})\BibitemShut {NoStop}%
\bibitem [{\citenamefont {Paris}(2009)}]{Paris2009}%
  \BibitemOpen
  \bibfield  {author} {\bibinfo {author} {\bibfnamefont {M.~G.~A.}\ \bibnamefont {Paris}},\ }\bibfield  {title} {\bibinfo {title} {Quantum estimation for quantum technology},\ }\href {https://doi.org/10.1142/S0219749909004839} {\bibfield  {journal} {\bibinfo  {journal} {Int. J. Quantum Inf.}\ }\textbf {\bibinfo {volume} {07}},\ \bibinfo {pages} {125} (\bibinfo {year} {2009})},\ \Eprint {https://arxiv.org/abs/0804.2981} {arXiv:0804.2981} \BibitemShut {NoStop}%
\bibitem [{\citenamefont {Liu}\ \emph {et~al.}(2020)\citenamefont {Liu}, \citenamefont {Yuan}, \citenamefont {Lu},\ and\ \citenamefont {Wang}}]{Liu2020}%
  \BibitemOpen
  \bibfield  {author} {\bibinfo {author} {\bibfnamefont {J.}~\bibnamefont {Liu}}, \bibinfo {author} {\bibfnamefont {H.}~\bibnamefont {Yuan}}, \bibinfo {author} {\bibfnamefont {X.-M.}\ \bibnamefont {Lu}},\ and\ \bibinfo {author} {\bibfnamefont {X.}~\bibnamefont {Wang}},\ }\bibfield  {title} {\bibinfo {title} {Quantum {{Fisher}} information matrix and multiparameter estimation},\ }\href {https://doi.org/10.1088/1751-8121/ab5d4d} {\bibfield  {journal} {\bibinfo  {journal} {J. Phys. A}\ }\textbf {\bibinfo {volume} {53}},\ \bibinfo {pages} {023001} (\bibinfo {year} {2020})},\ \Eprint {https://arxiv.org/abs/1907.08037} {arXiv:1907.08037} \BibitemShut {NoStop}%
\bibitem [{\citenamefont {Suzuki}(2024)}]{Suzuki2024a}%
  \BibitemOpen
  \bibfield  {author} {\bibinfo {author} {\bibfnamefont {J.}~\bibnamefont {Suzuki}},\ }\bibfield  {title} {\bibinfo {title} {Bayesian {{Nagaoka-Hayashi Bound}} for {{Multiparameter Quantum-State Estimation Problem}}},\ }\href {https://doi.org/10.1587/transfun.2023TAP0014} {\bibfield  {journal} {\bibinfo  {journal} {IEICE Trans. Fundam. Electron. Commun. Comput. Sci.}\ }\textbf {\bibinfo {volume} {E107.A}},\ \bibinfo {pages} {510} (\bibinfo {year} {2024})},\ \Eprint {https://arxiv.org/abs/2302.14223} {arXiv:2302.14223} \BibitemShut {NoStop}%
\bibitem [{\citenamefont {Zhang}\ and\ \citenamefont {Suzuki}(2024)}]{Zhang2024g}%
  \BibitemOpen
  \bibfield  {author} {\bibinfo {author} {\bibfnamefont {J.}~\bibnamefont {Zhang}}\ and\ \bibinfo {author} {\bibfnamefont {J.}~\bibnamefont {Suzuki}},\ }\href {https://doi.org/10.48550/arXiv.2405.10525} {\bibinfo {title} {Bayesian {{Logarithmic Derivative Type Lower Bounds}} for {{Quantum Estimation}}}} (\bibinfo {year} {2024}),\ \Eprint {https://arxiv.org/abs/2405.10525} {arXiv:2405.10525} \BibitemShut {NoStop}%
\bibitem [{\citenamefont {Zhang}\ \emph {et~al.}(2026)\citenamefont {Zhang}, \citenamefont {Yamagata},\ and\ \citenamefont {Suzuki}}]{Zhang2026n}%
  \BibitemOpen
  \bibfield  {author} {\bibinfo {author} {\bibfnamefont {J.}~\bibnamefont {Zhang}}, \bibinfo {author} {\bibfnamefont {K.}~\bibnamefont {Yamagata}},\ and\ \bibinfo {author} {\bibfnamefont {J.}~\bibnamefont {Suzuki}},\ }\href {https://doi.org/10.48550/arXiv.2607.01685} {\bibinfo {title} {Bayesian {{Monotone Metrics}} for {{Multiparameter Quantum Estimation}}}} (\bibinfo {year} {2026}),\ \Eprint {https://arxiv.org/abs/2607.01685} {arXiv:2607.01685} \BibitemShut {NoStop}%
\bibitem [{\citenamefont {Zhao}\ \emph {et~al.}(2026)\citenamefont {Zhao}, \citenamefont {Yamagata},\ and\ \citenamefont {Suzuki}}]{Zhao2026c}%
  \BibitemOpen
  \bibfield  {author} {\bibinfo {author} {\bibfnamefont {K.-H.}\ \bibnamefont {Zhao}}, \bibinfo {author} {\bibfnamefont {K.}~\bibnamefont {Yamagata}},\ and\ \bibinfo {author} {\bibfnamefont {J.}~\bibnamefont {Suzuki}},\ }\href {https://doi.org/10.48550/arXiv.2607.07031} {\bibinfo {title} {Bayesian {{Gill-Massar Bound}}: {{An Attainable Lower Bound}} for {{Qubit Parameter Estimation}}}} (\bibinfo {year} {2026}),\ \Eprint {https://arxiv.org/abs/2607.07031} {arXiv:2607.07031} \BibitemShut {NoStop}%
\bibitem [{\citenamefont {Bia}\ and\ \citenamefont {Gagatsos}(2026)}]{Bia2026}%
  \BibitemOpen
  \bibfield  {author} {\bibinfo {author} {\bibfnamefont {L.}~\bibnamefont {Bia}}\ and\ \bibinfo {author} {\bibfnamefont {C.~N.}\ \bibnamefont {Gagatsos}},\ }\href {https://doi.org/10.48550/arXiv.2607.15398} {\bibinfo {title} {Saturating the {{Bayesian Nagaoka-Hayashi}} bound within numerical precision for the depolarization {{SU}}(2) rotation channel}} (\bibinfo {year} {2026}),\ \Eprint {https://arxiv.org/abs/2607.15398} {arXiv:2607.15398} \BibitemShut {NoStop}%
\bibitem [{\citenamefont {Law}\ and\ \citenamefont {Bern{\'a}d}(2026)}]{Law2026}%
  \BibitemOpen
  \bibfield  {author} {\bibinfo {author} {\bibfnamefont {C.~K.~D.}\ \bibnamefont {Law}}\ and\ \bibinfo {author} {\bibfnamefont {J.~Z.}\ \bibnamefont {Bern{\'a}d}},\ }\href {https://doi.org/10.48550/arXiv.2608.10230} {\bibinfo {title} {Bayesian approach to simultaneous quantum multiparameter estimation with finite data}} (\bibinfo {year} {2026}),\ \Eprint {https://arxiv.org/abs/2608.10230} {arXiv:2608.10230} \BibitemShut {NoStop}%
\bibitem [{\citenamefont {Yang}\ \emph {et~al.}(2019)\citenamefont {Yang}, \citenamefont {Chiribella},\ and\ \citenamefont {Hayashi}}]{Yang2019}%
  \BibitemOpen
  \bibfield  {author} {\bibinfo {author} {\bibfnamefont {Y.}~\bibnamefont {Yang}}, \bibinfo {author} {\bibfnamefont {G.}~\bibnamefont {Chiribella}},\ and\ \bibinfo {author} {\bibfnamefont {M.}~\bibnamefont {Hayashi}},\ }\bibfield  {title} {\bibinfo {title} {Attaining the {{Ultimate Precision Limit}} in {{Quantum State Estimation}}},\ }\href {https://doi.org/10.1007/s00220-019-03433-4} {\bibfield  {journal} {\bibinfo  {journal} {Commun. Math. Phys.}\ }\textbf {\bibinfo {volume} {368}},\ \bibinfo {pages} {223} (\bibinfo {year} {2019})},\ \Eprint {https://arxiv.org/abs/1802.07587} {arXiv:1802.07587} \BibitemShut {NoStop}%
\bibitem [{\citenamefont {Carollo}\ \emph {et~al.}(2019)\citenamefont {Carollo}, \citenamefont {Spagnolo}, \citenamefont {Dubkov},\ and\ \citenamefont {Valenti}}]{Carollo2019}%
  \BibitemOpen
  \bibfield  {author} {\bibinfo {author} {\bibfnamefont {A.}~\bibnamefont {Carollo}}, \bibinfo {author} {\bibfnamefont {B.}~\bibnamefont {Spagnolo}}, \bibinfo {author} {\bibfnamefont {A.~A.}\ \bibnamefont {Dubkov}},\ and\ \bibinfo {author} {\bibfnamefont {D.}~\bibnamefont {Valenti}},\ }\bibfield  {title} {\bibinfo {title} {On quantumness in multi-parameter quantum estimation},\ }\href {https://doi.org/10.1088/1742-5468/ab3ccb} {\bibfield  {journal} {\bibinfo  {journal} {J. Stat. Mech. Theory Exp.}\ }\textbf {\bibinfo {volume} {2019}},\ \bibinfo {pages} {094010} (\bibinfo {year} {2019})},\ \Eprint {https://arxiv.org/abs/1911.11621} {arXiv:1911.11621} \BibitemShut {NoStop}%
\bibitem [{\citenamefont {Tsang}\ \emph {et~al.}(2020)\citenamefont {Tsang}, \citenamefont {Albarelli},\ and\ \citenamefont {Datta}}]{Tsang2020a}%
  \BibitemOpen
  \bibfield  {author} {\bibinfo {author} {\bibfnamefont {M.}~\bibnamefont {Tsang}}, \bibinfo {author} {\bibfnamefont {F.}~\bibnamefont {Albarelli}},\ and\ \bibinfo {author} {\bibfnamefont {A.}~\bibnamefont {Datta}},\ }\bibfield  {title} {\bibinfo {title} {Quantum {{Semiparametric Estimation}}},\ }\href {https://doi.org/10.1103/PhysRevX.10.031023} {\bibfield  {journal} {\bibinfo  {journal} {Phys. Rev. X}\ }\textbf {\bibinfo {volume} {10}},\ \bibinfo {pages} {031023} (\bibinfo {year} {2020})},\ \Eprint {https://arxiv.org/abs/1906.09871} {arXiv:1906.09871} \BibitemShut {NoStop}%
\bibitem [{\citenamefont {Rubio}(2024)}]{Rubio2024}%
  \BibitemOpen
  \bibfield  {author} {\bibinfo {author} {\bibfnamefont {J.}~\bibnamefont {Rubio}},\ }\bibfield  {title} {\bibinfo {title} {First-principles construction of symmetry-informed quantum metrologies},\ }\href {https://doi.org/10.1103/PhysRevA.110.L030401} {\bibfield  {journal} {\bibinfo  {journal} {Phys. Rev. A}\ }\textbf {\bibinfo {volume} {110}},\ \bibinfo {pages} {L030401} (\bibinfo {year} {2024})},\ \Eprint {https://arxiv.org/abs/2402.16410} {arXiv:2402.16410} \BibitemShut {NoStop}%
\bibitem [{\citenamefont {{Demkowicz-Dobrza{\'n}ski}}(2011)}]{Demkowicz-Dobrzanski2011}%
  \BibitemOpen
  \bibfield  {author} {\bibinfo {author} {\bibfnamefont {R.}~\bibnamefont {{Demkowicz-Dobrza{\'n}ski}}},\ }\bibfield  {title} {\bibinfo {title} {Optimal phase estimation with arbitrary a priori knowledge},\ }\href {https://doi.org/10.1103/PhysRevA.83.061802} {\bibfield  {journal} {\bibinfo  {journal} {Phys. Rev. A}\ }\textbf {\bibinfo {volume} {83}},\ \bibinfo {pages} {061802} (\bibinfo {year} {2011})},\ \Eprint {https://arxiv.org/abs/1102.0786} {arXiv:1102.0786} \BibitemShut {NoStop}%
\bibitem [{\citenamefont {Mishra}\ \emph {et~al.}(2025)\citenamefont {Mishra}, \citenamefont {Lami},\ and\ \citenamefont {Wilde}}]{Mishra2025}%
  \BibitemOpen
  \bibfield  {author} {\bibinfo {author} {\bibfnamefont {H.~K.}\ \bibnamefont {Mishra}}, \bibinfo {author} {\bibfnamefont {L.}~\bibnamefont {Lami}},\ and\ \bibinfo {author} {\bibfnamefont {M.~M.}\ \bibnamefont {Wilde}},\ }\href {https://doi.org/10.48550/arXiv.2505.20447} {\bibinfo {title} {Near-optimal performance of square-root measurement for general score functions and quantum ensembles}} (\bibinfo {year} {2025}),\ \Eprint {https://arxiv.org/abs/2505.20447} {arXiv:2505.20447} \BibitemShut {NoStop}%
\bibitem [{\citenamefont {Lin}\ and\ \citenamefont {de~Wolf}(2025)}]{Lin2025a}%
  \BibitemOpen
  \bibfield  {author} {\bibinfo {author} {\bibfnamefont {H.-H.}\ \bibnamefont {Lin}}\ and\ \bibinfo {author} {\bibfnamefont {R.}~\bibnamefont {de~Wolf}},\ }\href {https://doi.org/10.48550/arXiv.2506.01903} {\bibinfo {title} {Getting almost all the bits from a quantum random access code}} (\bibinfo {year} {2025}),\ \Eprint {https://arxiv.org/abs/2506.01903} {arXiv:2506.01903} \BibitemShut {NoStop}%
\bibitem [{\citenamefont {Grewal}\ \emph {et~al.}(2026)\citenamefont {Grewal}, \citenamefont {Pelecanos}, \citenamefont {Spilecki}, \citenamefont {Tang},\ and\ \citenamefont {Wright}}]{Grewal2026a}%
  \BibitemOpen
  \bibfield  {author} {\bibinfo {author} {\bibfnamefont {S.}~\bibnamefont {Grewal}}, \bibinfo {author} {\bibfnamefont {A.}~\bibnamefont {Pelecanos}}, \bibinfo {author} {\bibfnamefont {J.}~\bibnamefont {Spilecki}}, \bibinfo {author} {\bibfnamefont {E.}~\bibnamefont {Tang}},\ and\ \bibinfo {author} {\bibfnamefont {J.}~\bibnamefont {Wright}},\ }\href {https://doi.org/10.48550/arXiv.2610.06781} {\bibinfo {title} {Pretty good shadow tomography}} (\bibinfo {year} {2026}),\ \Eprint {https://arxiv.org/abs/2610.06781} {arXiv:2610.06781} \BibitemShut {NoStop}%
\bibitem [{Note1()}]{Note1}%
  \BibitemOpen
  \bibinfo {note} {The generalised BK inequality on which our results are based holds also for infinite-dimensional Hilbert spaces~\cite {Mishra2025}. Thus, we expect our results to extend to this setting under mild regularity assumptions, which however we do not work out in detail here, favouring a more concise and pedagogical presentation.}\BibitemShut {Stop}%
\bibitem [{\citenamefont {Suzuki}\ \emph {et~al.}(2020)\citenamefont {Suzuki}, \citenamefont {Yang},\ and\ \citenamefont {Hayashi}}]{Suzuki2020b}%
  \BibitemOpen
  \bibfield  {author} {\bibinfo {author} {\bibfnamefont {J.}~\bibnamefont {Suzuki}}, \bibinfo {author} {\bibfnamefont {Y.}~\bibnamefont {Yang}},\ and\ \bibinfo {author} {\bibfnamefont {M.}~\bibnamefont {Hayashi}},\ }\bibfield  {title} {\bibinfo {title} {Quantum state estimation with nuisance parameters},\ }\href {https://doi.org/10.1088/1751-8121/ab8b78} {\bibfield  {journal} {\bibinfo  {journal} {J. Phys. A}\ }\textbf {\bibinfo {volume} {53}},\ \bibinfo {pages} {453001} (\bibinfo {year} {2020})},\ \Eprint {https://arxiv.org/abs/1911.02790} {arXiv:1911.02790} \BibitemShut {NoStop}%
\bibitem [{\citenamefont {Personick}(1971)}]{Personick1971}%
  \BibitemOpen
  \bibfield  {author} {\bibinfo {author} {\bibfnamefont {S.~D.}\ \bibnamefont {Personick}},\ }\bibfield  {title} {\bibinfo {title} {Application of quantum estimation theory to analog communication over quantum channels},\ }\href {https://doi.org/10.1109/TIT.1971.1054643} {\bibfield  {journal} {\bibinfo  {journal} {IEEE Trans. Inf. Theory}\ }\textbf {\bibinfo {volume} {17}},\ \bibinfo {pages} {240} (\bibinfo {year} {1971})}\BibitemShut {NoStop}%
\bibitem [{\citenamefont {Belavkin}(1975)}]{Belavkin1975}%
  \BibitemOpen
  \bibfield  {author} {\bibinfo {author} {\bibfnamefont {V.~P.}\ \bibnamefont {Belavkin}},\ }\bibfield  {title} {\bibinfo {title} {Optimal multiple quantum statistical hypothesis testing},\ }\href {https://doi.org/10.1080/17442507508833114} {\bibfield  {journal} {\bibinfo  {journal} {Stochastics}\ }\textbf {\bibinfo {volume} {1}},\ \bibinfo {pages} {315} (\bibinfo {year} {1975})}\BibitemShut {NoStop}%
\bibitem [{\citenamefont {Hausladen}\ and\ \citenamefont {Wootters}(1994)}]{Hausladen1994}%
  \BibitemOpen
  \bibfield  {author} {\bibinfo {author} {\bibfnamefont {P.}~\bibnamefont {Hausladen}}\ and\ \bibinfo {author} {\bibfnamefont {W.~K.}\ \bibnamefont {Wootters}},\ }\bibfield  {title} {\bibinfo {title} {A `{{Pretty Good}}' {{Measurement}} for {{Distinguishing Quantum States}}},\ }\href {https://doi.org/10.1080/09500349414552221} {\bibfield  {journal} {\bibinfo  {journal} {J. Mod. Opt.}\ }\textbf {\bibinfo {volume} {41}},\ \bibinfo {pages} {2385} (\bibinfo {year} {1994})}\BibitemShut {NoStop}%
\bibitem [{\citenamefont {Macchiavello}(2003)}]{Macchiavello2003}%
  \BibitemOpen
  \bibfield  {author} {\bibinfo {author} {\bibfnamefont {C.}~\bibnamefont {Macchiavello}},\ }\bibfield  {title} {\bibinfo {title} {Optimal estimation of multiple phases},\ }\href {https://doi.org/10.1103/PhysRevA.67.062302} {\bibfield  {journal} {\bibinfo  {journal} {Phys. Rev. A}\ }\textbf {\bibinfo {volume} {67}},\ \bibinfo {pages} {062302} (\bibinfo {year} {2003})},\ \Eprint {https://arxiv.org/abs/quant-ph/0304126} {arXiv:quant-ph/0304126} \BibitemShut {NoStop}%
\bibitem [{\citenamefont {Barbieri}\ \emph {et~al.}(2024)\citenamefont {Barbieri}, \citenamefont {Gianani}, \citenamefont {Goldberg},\ and\ \citenamefont {{S{\'a}nchez-Soto}}}]{Barbieri2024a}%
  \BibitemOpen
  \bibfield  {author} {\bibinfo {author} {\bibfnamefont {M.}~\bibnamefont {Barbieri}}, \bibinfo {author} {\bibfnamefont {I.}~\bibnamefont {Gianani}}, \bibinfo {author} {\bibfnamefont {A.~Z.}\ \bibnamefont {Goldberg}},\ and\ \bibinfo {author} {\bibfnamefont {L.~L.}\ \bibnamefont {{S{\'a}nchez-Soto}}},\ }\bibfield  {title} {\bibinfo {title} {Quantum multiphase estimation},\ }\href {https://doi.org/10.1080/00107514.2025.2469974} {\bibfield  {journal} {\bibinfo  {journal} {Contemp. Phys.}\ }\textbf {\bibinfo {volume} {65}},\ \bibinfo {pages} {112} (\bibinfo {year} {2024})},\ \Eprint {https://arxiv.org/abs/2502.07873} {arXiv:2502.07873} \BibitemShut {NoStop}%
\bibitem [{\citenamefont {{Demkowicz-Dobrza{\'n}ski}}\ \emph {et~al.}(2015)\citenamefont {{Demkowicz-Dobrza{\'n}ski}}, \citenamefont {Jarzyna},\ and\ \citenamefont {Ko{\l}ody{\'n}ski}}]{Demkowicz-Dobrzanski2015a}%
  \BibitemOpen
  \bibfield  {author} {\bibinfo {author} {\bibfnamefont {R.}~\bibnamefont {{Demkowicz-Dobrza{\'n}ski}}}, \bibinfo {author} {\bibfnamefont {M.}~\bibnamefont {Jarzyna}},\ and\ \bibinfo {author} {\bibfnamefont {J.}~\bibnamefont {Ko{\l}ody{\'n}ski}},\ }\bibfield  {title} {\bibinfo {title} {Quantum {{Limits}} in {{Optical Interferometry}}},\ }in\ \href {https://doi.org/10.1016/bs.po.2015.02.003} {\emph {\bibinfo {booktitle} {Progress in {{Optics}}, {{Volume}} 60}}},\ \bibinfo {editor} {edited by\ \bibinfo {editor} {\bibfnamefont {E.}~\bibnamefont {Wolf}}}\ (\bibinfo  {publisher} {Elsevier},\ \bibinfo {address} {Amsterdam},\ \bibinfo {year} {2015})\ pp.\ \bibinfo {pages} {345--435},\ \Eprint {https://arxiv.org/abs/1405.7703} {arXiv:1405.7703} \BibitemShut {NoStop}%
\bibitem [{\citenamefont {Collett}(1993)}]{Collett1993}%
  \BibitemOpen
  \bibfield  {author} {\bibinfo {author} {\bibfnamefont {M.~J.}\ \bibnamefont {Collett}},\ }\bibfield  {title} {\bibinfo {title} {Phase noise in a squeezed state},\ }\href {https://doi.org/10.1088/0031-8949/1993/T48/019} {\bibfield  {journal} {\bibinfo  {journal} {Phys. Scr.}\ }\textbf {\bibinfo {volume} {T48}},\ \bibinfo {pages} {124} (\bibinfo {year} {1993})}\BibitemShut {NoStop}%
\bibitem [{\citenamefont {Berry}\ \emph {et~al.}(2009)\citenamefont {Berry}, \citenamefont {Higgins}, \citenamefont {Bartlett}, \citenamefont {Mitchell}, \citenamefont {Pryde},\ and\ \citenamefont {Wiseman}}]{Berry2009}%
  \BibitemOpen
  \bibfield  {author} {\bibinfo {author} {\bibfnamefont {D.~W.}\ \bibnamefont {Berry}}, \bibinfo {author} {\bibfnamefont {B.~L.}\ \bibnamefont {Higgins}}, \bibinfo {author} {\bibfnamefont {S.~D.}\ \bibnamefont {Bartlett}}, \bibinfo {author} {\bibfnamefont {M.~W.}\ \bibnamefont {Mitchell}}, \bibinfo {author} {\bibfnamefont {G.~J.}\ \bibnamefont {Pryde}},\ and\ \bibinfo {author} {\bibfnamefont {H.~M.}\ \bibnamefont {Wiseman}},\ }\bibfield  {title} {\bibinfo {title} {How to perform the most accurate possible phase measurements},\ }\href {https://doi.org/10.1103/PhysRevA.80.052114} {\bibfield  {journal} {\bibinfo  {journal} {Phys. Rev. A}\ }\textbf {\bibinfo {volume} {80}},\ \bibinfo {pages} {052114} (\bibinfo {year} {2009})},\ \Eprint {https://arxiv.org/abs/0907.0014} {arXiv:0907.0014} \BibitemShut {NoStop}%
\bibitem [{\citenamefont {Berg}\ \emph {et~al.}(1984)\citenamefont {Berg}, \citenamefont {Christensen},\ and\ \citenamefont {Ressel}}]{Berg1984}%
  \BibitemOpen
  \bibfield  {author} {\bibinfo {author} {\bibfnamefont {C.}~\bibnamefont {Berg}}, \bibinfo {author} {\bibfnamefont {J.~P.~R.}\ \bibnamefont {Christensen}},\ and\ \bibinfo {author} {\bibfnamefont {P.}~\bibnamefont {Ressel}},\ }\href@noop {} {\emph {\bibinfo {title} {Harmonic Analysis on Semigroups: Theory of Positive Definite and Related Functions}}},\ \bibinfo {series} {Graduate Texts in Mathematics}\ No.\ \bibinfo {number} {100}\ (\bibinfo  {publisher} {Springer-Verlag},\ \bibinfo {address} {New York},\ \bibinfo {year} {1984})\BibitemShut {NoStop}%
\bibitem [{\citenamefont {Jaynes}\ and\ \citenamefont {Bretthorst}(2003)}]{Jaynes2003a}%
  \BibitemOpen
  \bibfield  {author} {\bibinfo {author} {\bibfnamefont {E.~T.}\ \bibnamefont {Jaynes}}\ and\ \bibinfo {author} {\bibfnamefont {G.~L.}\ \bibnamefont {Bretthorst}},\ }\href@noop {} {\emph {\bibinfo {title} {Probability Theory: The Logic of Science}}}\ (\bibinfo  {publisher} {Cambridge University Press},\ \bibinfo {address} {Cambridge},\ \bibinfo {year} {2003})\BibitemShut {NoStop}%
\bibitem [{\citenamefont {Ragy}\ \emph {et~al.}(2016)\citenamefont {Ragy}, \citenamefont {Jarzyna},\ and\ \citenamefont {{Demkowicz-Dobrza{\'n}ski}}}]{Ragy2016}%
  \BibitemOpen
  \bibfield  {author} {\bibinfo {author} {\bibfnamefont {S.}~\bibnamefont {Ragy}}, \bibinfo {author} {\bibfnamefont {M.}~\bibnamefont {Jarzyna}},\ and\ \bibinfo {author} {\bibfnamefont {R.}~\bibnamefont {{Demkowicz-Dobrza{\'n}ski}}},\ }\bibfield  {title} {\bibinfo {title} {Compatibility in multiparameter quantum metrology},\ }\href {https://doi.org/10.1103/PhysRevA.94.052108} {\bibfield  {journal} {\bibinfo  {journal} {Phys. Rev. A}\ }\textbf {\bibinfo {volume} {94}},\ \bibinfo {pages} {052108} (\bibinfo {year} {2016})},\ \Eprint {https://arxiv.org/abs/1608.02634} {arXiv:1608.02634} \BibitemShut {NoStop}%
\bibitem [{Note2()}]{Note2}%
  \BibitemOpen
  \bibinfo {note} {R.~Demkowicz-Dobrza\'nski, private communication; the argument will appear in a forthcoming book.}\BibitemShut {Stop}%
\bibitem [{\citenamefont {Meyer}\ \emph {et~al.}(2025)\citenamefont {Meyer}, \citenamefont {Khatri}, \citenamefont {Stilck~Fran{\c c}a}, \citenamefont {Eisert},\ and\ \citenamefont {Faist}}]{Meyer2025}%
  \BibitemOpen
  \bibfield  {author} {\bibinfo {author} {\bibfnamefont {J.~J.}\ \bibnamefont {Meyer}}, \bibinfo {author} {\bibfnamefont {S.}~\bibnamefont {Khatri}}, \bibinfo {author} {\bibfnamefont {D.}~\bibnamefont {Stilck~Fran{\c c}a}}, \bibinfo {author} {\bibfnamefont {J.}~\bibnamefont {Eisert}},\ and\ \bibinfo {author} {\bibfnamefont {P.}~\bibnamefont {Faist}},\ }\bibfield  {title} {\bibinfo {title} {Quantum {{Metrology}} in the {{Finite-Sample Regime}}},\ }\href {https://doi.org/10.1103/qbn1-p6bq} {\bibfield  {journal} {\bibinfo  {journal} {PRX Quantum}\ }\textbf {\bibinfo {volume} {6}},\ \bibinfo {pages} {030336} (\bibinfo {year} {2025})},\ \Eprint {https://arxiv.org/abs/2307.06370} {arXiv:2307.06370} \BibitemShut {NoStop}%
\bibitem [{\citenamefont {Belliardo}\ \emph {et~al.}(2026)\citenamefont {Belliardo}, \citenamefont {Gardner}, \citenamefont {Jiang},\ and\ \citenamefont {Clerk}}]{Belliardo2026b}%
  \BibitemOpen
  \bibfield  {author} {\bibinfo {author} {\bibfnamefont {F.}~\bibnamefont {Belliardo}}, \bibinfo {author} {\bibfnamefont {J.~W.}\ \bibnamefont {Gardner}}, \bibinfo {author} {\bibfnamefont {L.}~\bibnamefont {Jiang}},\ and\ \bibinfo {author} {\bibfnamefont {A.~A.}\ \bibnamefont {Clerk}},\ }\href {https://doi.org/10.48550/arXiv.2608.15528} {\bibinfo {title} {Global precision bounds and success-probability guarantees in quantum parameter learning}} (\bibinfo {year} {2026}),\ \Eprint {https://arxiv.org/abs/2608.15528} {arXiv:2608.15528} \BibitemShut {NoStop}%
\end{thebibliography}%

\end{document}